\newbox\BA@first@box
\newtheorem{Thm}{Theorem}
\newtheorem{Lem}{Lemma}
\newtheorem{Prop}{Proposition}
\newtheoremstyle{noCaption}
{\topsep}
{\topsep}
{\itshape}
{}
{}
{}
{0pt}
{}%
\begin{document}
\title{Distinguishability-based genuine nonlocality with genuine multipartite entanglement}
\author{Zong-Xing Xiong$^{1}$\footnote{hongbe123@outlook.com}, Mao-Sheng Li$^{2}$, Zhu-Jun Zheng$^{2}$, Lvzhou Li$^{1}$\footnote{lilvzh@mail.sysu.edu.cn}\\
{\footnotesize{\it $^1$Institute of Quantum Computing and Computer Theory, School of Computer Science and Engineering,}}\\
{\footnotesize{\it Sun Yat-Sen University, Guangzhou 510006, China}}\\
{\footnotesize{\it $^2$School of Mathematics, South China University of Technology, Guangzhou 510640, China}}\\
}
\affiliation{
}

\begin{abstract}
A set of orthogonal multipartite quantum states is said to be distinguishability-based genuinely nonlocal (also genuinely nonlocal, for abbreviation) if the states are locally indistinguishable across any bipartition of the subsystems. This form of multipartite nonlocality, although more naturally arising than the recently popular ``strong nonlocality'' in the context of local distinguishability, receives much less attention. In this work, we study the distinguishability-based genuine nonlocality of a typical type of genuine multipartite entangled states --- the $d$-dimensional GHZ states, featuring systems with local dimension not limited to $2$. In the three-partite case, we find the existence of small genuinely nonlocal sets consisting of these states: we show that the cardinality can at least scale down to linear in the local dimension $d$, with the linear factor $l=1$. Specifically, the method we use is semidefinite program and the GHZ states to construct these sets are special ones which we call ``GHZ-lattices''. This result might arguably suggest a significant gap between the strength of strong nonlocality and the distinguishability-based genuine nonlocality.  Moreover, we put forward the notion of $(s,n)$-threshold distinguishability and utilizing a similar method, we successfully construct $(2,3)$-threshold sets consisting of GHZ states in three-partite systems.
\end{abstract}
\pacs{03.  67.  Hk,  03.  65.  Ud }
\maketitle

\section{Introduction}
Quantum nonlocality is one of the most surprising properties in quantum mechanics. The most well-known manifestation of nonlocality --- Bell nonlocality \cite{Bell1964,Clauser1969,Brunner2014}, which is revealed by violation of Bell type inequilities \cite{Clauser1969,Freedman1972,Aspect1981,Aspect1982,Hensen2015,BIG Bell Test Collaboration}, can only arise from entangled states. Apart from Bell nonlocality, there are also other forms of nonlocality. Among them, the distinguishability-based nonlocality, which concerns the problem of locally distinguishing a certain set of orthogonal quantum states, has attracted much attention. It serves to explore the fundamental question about locally accessing global information and also the relation between entanglement and locality, which are of central interest in quantum information theory. Unlike Bell nonlocality however, entanglement is not necessary for this kind of nonlocality. In \cite{Bennett99}, Bennett et al. presented the first example of orthogonal product states that are indistinguishable when only local operations and classical communications are allowed, which was known as ``quantum nonlocality without entanglement''. From then on, such nonlocality based on distinguishability has been studied extensively (see \cite{Bennett99,Bennett1999,Walgate00,Walgate02,DiVicenzo03,Horodecki03,Hayashi06,Bandyopadhyay09,Bandyopadhyay10,Niset06,Ghosh01,Ghosh04,Fan04,Nathanson05,Bandyopadhyay11,Tian2015,Tian15,Xiong19,Li2019,Yu12,Nathanson13,Cosentino13,CosentinoR14,Li15,Yu15,Xu2016,Zhang2016,Zhang2017,Halder2018,Li2018,Jiang2020} for an incomplete list). Moreover, the local discrimination of quantum states has been practically applied in a number of distributed quantum protocols such as quantum data hiding \cite{Terhal2001,DiVicenzo2002,Eggeling2002,Matthews2009} and quantum secret sharing \cite{Markham2008,Hillery1999,Schauer2010,Cleve1999,Gottesman2000,Rahaman2015,Wang2017}.

Although not necessary for distinguishability-based nonlocality, evidences have been found showing that entanglement can somehow raise difficulty in quantum state discrimination in many occasions. To what extent entanglement is responsible for indistinguishability has been one of the main focuses \cite{Horodecki03,Hayashi06,Bandyopadhyay09,Bandyopadhyay10}. In \cite{Hayashi06}, Hayashi et al. observed that the number of pure states that can be perfectly distinguished locally is bounded above by the total dimension over the average entanglement of the states. This result quantitatively shows that states with more entanglement will generally (while not always) be more difficult to be distinguished. Meanwhile, lots of studies had focused on local distinguishability of the maximally entangled states \cite{Ghosh01,Ghosh04,Fan04,Tian2015,Tian15,Xiong19,Li2019}, as opposed to studying the distinguishability of product states. Notably, it can be deduced directly from the results of Hayashi et al. \cite{Hayashi06} that any $k > d$ orthogonal maximally entangled states in $\mathbb{C}^d \otimes \mathbb{C}^d$ are not locally distinguishable. This fact was actually first revealed by Nathanson who also showed that any three orthogonal maximally entangled states in $\mathbb{C}^3 \otimes \mathbb{C}^3$ is locally distinguishable \cite{Nathanson05}. In spite of these quantitative results indicating the maximal number of perfectly distinguishable states, we have on the other hand little idea on the minimal number of locally indistinguishable states. What is known to us is that any two orthogonal pure states can always be locally distinguished, no matter whether the states are entangled or not \cite{Walgate00}. Frustratedly, whether there exist three locally indistinguishable maximally entangled states in system $\mathbb{C}^d \otimes \mathbb{C}^d$ for $d > 3$ remains unknown today. Therefore, efforts had been paid in seeking sets of $k \leq d$ orthogonal maximally entangled states that are not perfectly distinguishable by local operations and classical communications (LOCC) \cite{Nathanson05,Bandyopadhyay11,Yu12,Nathanson13,Cosentino13,CosentinoR14,Li15,Yu15}. The best result so far is due to Yu and Oh \cite{Yu15}, who showed that the cardinality of indistinguishable sets of maximally entangled states in $\mathbb{C}^d \otimes \mathbb{C}^d$ can asymptotically scale down to $3d/4$ when $d$ is large.

Not long ago, Halder et al. \cite{Halder19} introduced the concept of local irreduciblity, which is a stronger form of local indistinguishability. A set of orthogonal multipartite quantum states is said to be locally irreducible if one cannot eleminate one or more states from the whole set, with the restriction that only orthogonality-preserving local measurements are allowed. They revealed the phenomenon of ``strong nonlocality without entanglement'', which is a nontrivial
generation of Bennett's nonlocality without entanglement \cite{Bennett99}. Strong nonlocality of a set of orthogonal multipartite quantum states refers to local irreducibility of these states through every bipartition of the subsystems and the authors of \cite{Halder19} presented the first examples of strong nonlocal sets of product states in $\mathbb{C}^3 \otimes \mathbb{C}^3 \otimes \mathbb{C}^3$ and $\mathbb{C}^4 \otimes \mathbb{C}^4 \otimes \mathbb{C}^4$.  After that, much attention has been paid to study the strong nonlocality of multipartite quantum states \cite{Zhang19,Shi20,Gao,Gao2,Yuan20,Wang21,Shi22,Shi2022,Li22arxiv,Bandyopadhyay21,Li22}. In \cite{Yuan20}, the authors constructed $6(d-1)^2$ orthogonal product states in three-partite system $\mathbb{C}^d \otimes \mathbb{C}^d \otimes \mathbb{C}^d$ that is strongly nonlocal. Soon after, Wang et al. successfully constructed strongly nonlocal sets with genuine multipartite entanglement \cite{Wang21}. The strongly nonlocal sets they constructed are consisted of GHZ states in three-partite system $\mathbb{C}^d \otimes \mathbb{C}^d \otimes \mathbb{C}^d$ ($d \geq 3$) and have cardinality $d^3 - (d-2)^3$ ($d^3 - (d-2)^3 + 2$) when $d$ is odd (resp. even). Shi et al. gave more general results in $(\mathbb{C}^d)^{\otimes N} (N \geq 3)$, showing the existence of strongly nonlocal orthogonal entangled sets with size $d^N - (d-1)^N +1$ \cite{Shi22}. They also constructed $18$ strongly nonlocal genuine multipartite entangled states in $\mathbb{C}^3 \otimes \mathbb{C}^3 \otimes \mathbb{C}^3$, outperforming the former result of $19$ strongly nonlocal UPB in the same tripartite system \cite{Shi2022}.

In \cite{Rout19}, Rout et al. first studied the concept of distinguishability-based genuine nonlocality (there the authors named it genuine nonlocality simply; we also adopt this abbreviation here sometimes when no ambiguity occurs): a set of orthogonal multipartite quantum states is said to be (distinguishability-based) genuinely nonlocal if the states are locally indistinguishable across any bipartition of the subsystems. Further in \cite{Li21,Rout21}, the authors constructed sets of multipartite product states which are genuinely nonlocal. Namely, they found genuine nonlocality without entanglement. This concept of multipartite nonlocality is weaker but more natural than strong nonlocality in the context of distinguishability. It's obvious that strong nonlocality always implies genuine nonlocality \cite{Halder19}, yet in what extent the former is stronger than the latter, little is known.

In this paper, we are to study the distinguishability-based genuine nonlocality of a typical type of genuine multipartite entangled states --- the ($d$-dimensional) GHZ states on qudit systems. More explicitly, we focus on the three-partite case, where three-qudit GHZ states are maximally entangled in the sense that they are maximally entangled in all bipartitions. Similar to the bipartite case when maximally entangled states are studied, one can derive from the result of  Hayashi et al. \cite{Hayashi06} that any $k > d^2$ three-qudit GHZ states are genuinely nonlocal. However, we have yet no idea about the minimal number of such states having the property of genuine nonlocality. Here, we make a step forward by showing the existence of certain genuinely nonlocal sets of GHZ states that have rather small cardinality. We find that the cardinality can at least scale down to $d+3$, for the cases when local dimension $d$ are powers of $2$. The method we use here is inspired by  Cosentino \cite{Cosentino13,CosentinoR14} and the special kind of GHZ states to construct these sets are refered to as ``GHZ-lattices'' by us. Up till now, since no existing strongly nonlocal sets that've been constructed has such small magnitude, it's reasonable for us to argue that there might exist a substantial difference between the strength of strong nonlocality and the distinguishability-based genuine nonlocality. Furthermore, we put forward the notion of $(s,n)$-threshold distinguishability in $n$-partite system as an extension of genuine nonlocality. Adopting the similar method of semidefinite program in our former discussion,  we successfully construct $(2,3)$-threshold sets consisting of GHZ states in three-partite systems.

The rest of this paper is organized as follows: In Sec. \ref{section2}, we present some necessary notations and definitions. In Sec. \ref{section3}, we study the genuine nonlocality of the GHZ states in three-partite system $\mathbb{C}^d \otimes \mathbb{C}^d \otimes \mathbb{C}^d$. Taking advantage of the structure of ``GHZ-lattices'' and also the method of semidefinite program, we construct genuinely nonlocal sets of these states with conspicuously small cardinality. In Sec. \ref{section4}, we extend the notion of genuine nonlocality to the more general $(s,n)$-threshold distinguishability in multipartite systems and we construct $(2,3)$-threshold sets consisting of GHZ-lattices in three-partite systems. Finally, we draw our conclusion and present some relating problems in Sec. \ref{section5}.

\section{Preliminaries}\label{section2}
\textit{Genuine multipartite entanglement}:
In bipartite system $\mathcal{H}_A \otimes \mathcal{H}_B$, a pure state $|\Psi\rangle_{AB}$ is called entangled if it cannot be written as the tensor product of two pure states of the two subsystems. Namely, it's not of the form $|\Psi\rangle_{AB} = |\alpha\rangle_A \otimes |\beta\rangle_B$. For pure state $|\Psi\rangle_{A_1 \cdots A_n}$ in multipartite system $\mathcal{H}_{A_1} \otimes \cdots \otimes \mathcal{H}_{A_n}$, it is called genuinely multipartite entangled if it is entangled for any bipartition of the subsystems $\{A_1, \cdots, A_n\}$. The most well known kind of genuinely (three-partite) entangled states are the GHZ states and the W states in three-qubit system $\mathbb{C}^2 \otimes \mathbb{C}^2 \otimes\mathbb{C}^2$. They are in fact the only two types of genuine tripartite entangled states in $\mathbb{C}^2 \otimes \mathbb{C}^2 \otimes\mathbb{C}^2$ up to SLOCC equivalence \cite{Dur2000,Acin2001}.

\textit{$d$-dimensional GHZ (Greenberger-Horne-Zeilinger) states} \cite{ghz1990}:
In $n$-partite system $\mathbb{C}^{d} \otimes \mathbb{C}^{d} \otimes \cdots \otimes \mathbb{C}^{d}$, the states of the form
\begin{equation}
\frac{1}{\sqrt{d}} \sum_{k=0}^{d-1} |\xi_k^{(1)} \xi_k^{(2)} \cdots \xi_k^{(n)}\rangle
\end{equation}
are called $d$-dimensional GHZ states, where each $\{|\xi_k^{(i)}\rangle\}_{k=0}^{d-1}$ is any set of orthogonal basis for the $i$th subsystem. These states can be proved to be genuinely entangled and they have been considered as generic resources in many quantum information processing tasks \cite{Markham2008,Hillery1999,Schauer2010,Cleve1999,Gottesman2000,Rahaman2015,Wang2017,Karlsson1998,Shi2000,Joo2003}.

\textit{Local distinguishability}:
A set of orthogonal (pure) multipartite quantum states, which is priorly known to several spatially separated parties, is said to be locally distinguishable if the parties are able to tell exactly which state they share through some protocol, provided only local measurements and classical communications are allowed. In the literature, since the mathematical structure of LOCC measurement is rather complicated, it is usually the separable measurement or the positive-partial-transpose (PPT) measurement that is considered \cite{Bandyopadhyay13,Bandyopadhyay15,Duan09,Yu14,Li17}.

\textit{Local irreducibility}:
A set of multipartite orthogonal quantum states is said to be locally irreducible if it is not possible to locally eliminate one or more states from the set while preserving orthogonality of the postmeasurement states. Typical examples of locally irreducible sets include the two-qubit Bell basis and the three-qubit GHZ basis \cite{Halder19}.  While a locally irreducible set must be locally indistinguishable, the opposite is generally not true.

\textit{Distinguishability-based genuine nonlocality} and \textit{strong nonlocality}:
\ \ A set of orthogonal multipartite quantum states is called (distinguishability-based) genuinely nonlocal if the states are locally indistinguishable across every bipartition of the subsystems. Furthermore, if the states are locally irreducible across every bipartition, then they are called strongly nonlocal. Straightforwardly, a strongly nonlocal set must be also genuinely nonlocal, while the converse is not true. As it is shown immediately in the next section, while the three-qubit GHZ basis (\ref{GHZ}) failed to be strongly nonlocal \cite{Halder19}, they are genuinely nonlocal obviously.

\section{Genuine nonlocality for the three-qudit GHZ states}\label{section3}
In three-qudit system $\mathbb{C}^d \otimes \mathbb{C}^d \otimes \mathbb{C}^d$, the $d$-dimensional GHZ states are maximally entangled in the sence that all their reductions to $\lfloor \frac32 \rfloor$-qudit are maximally mixed. Here in this section, we are to discuss the genuine nonlocality of this typical kind of genuine three-partite entangled states. Roughly speaking, a set of more states might often appear harder for distinguishing while a set with less states is usually more likely to be distinguishable. For example, all supersets of an indistinguishable set are always indistinguishable, while on the other hand, all subsets of a distinguishable one are certainly also distinguishable. Therefore, one is interested in both the maximal number of states that might be distinguishable and also the minimal number of states that are indistinguishable (in our case, genuinely nonlocal). The following lemma that can be derived from Hayashi et al. \cite{Hayashi06} shows that the number of $d$-dimensional GHZ states in $\mathbb{C}^d \otimes \mathbb{C}^d \otimes \mathbb{C}^d$ that are not genuinely nonlocal is at most $d^2$:

\begin{Lem}\label{lemma0}\textnormal{\cite{Hayashi06}}
In three-partite system $\mathbb{C}^d \otimes \mathbb{C}^d \otimes \mathbb{C}^d$, any set of $s > d^2$  orthogonal $d$-dimensional GHZ states is genuinely nonlocal.
\end{Lem}

Here however, we will mainly focus on seeking small number of these  states that are genuinely nonlocal. We first discuss the simplest situation when $d=2$, namely the three-qubit case. In $\mathbb{C}^2 \otimes \mathbb{C}^2 \otimes\mathbb{C}^2$, the following $8$ orthogonal states:
\begin{equation}\begin{split} \label{GHZ}
|\psi_{0,7}\rangle=\frac{|000\rangle \pm |111\rangle}{\sqrt{2}}, \ \ |\psi_{1,6}\rangle=\frac{|001\rangle \pm |110\rangle}{\sqrt{2}}, \\
|\psi_{2,5}\rangle=\frac{|010\rangle \pm |101\rangle}{\sqrt{2}}, \ \ |\psi_{3,4}\rangle=\frac{|011\rangle \pm |100\rangle}{\sqrt{2}}, \\
\end{split}\end{equation}
consist a set of orthogonal basis. Here, we place the conjugate pairs together and assign them subscript indices which sum up to $2^3-1$. They are refered to as the three-qubit GHZ basis and are important in a number of quantum information processing scenarios.
In \cite{Halder19}, the authors showed that these states are not strongly nonlocal. Namely, when certain two of the three parties are allowed to join together, it is possible to locally eliminate one or more states from the whole set while preserving orthogonality of the postmeasurement states. Here, instead of discussing about the reducibility, we focus on the bipartite distinguishability problem of the three-qubit GHZ basis. It's immediate from Lemma \ref{lemma0} that: {\normalem \em Any $s \geq 5$ states of the three-qubit GHZ basis (\ref{GHZ}) are distinguishability-based genuinely nonlocal}. For the other side, results of \cite{Bandyopadhyay10} showed that this bound is tight, in the sense that there exist $4$ states among the three-qubit GHZ basis (\ref{GHZ}) that are not genuinely nonlocal. Here, we further show that any $4$ states among the three-qubit GHZ basis are not genuinely nonlocal. We give an elegant and visible way to demonstrate this fact.

\begin{Prop}\label{ghz4}
Any $4$ states of the three-qubit GHZ basis (\ref{GHZ}) can be locally distinguished by at least one of the bipartition $A|BC$, $B|CA$ or $C|AB$. That is, they are not genuinely nonlocal.
\end{Prop}

\begin{proof}
We demonstrate this fact with the aid of a schematic picture shown by FIG \ref{fig}. The four conjugate pairs are represented by four vertices of the tetrahedron and the coloured edges represent different bipartitions. Obviously, the $4$ states to be distinguished must be located on no less than two vertices of the tetrahedron in FIG \ref{fig}. If the $4$ states are located on four vertices of the tetrahedron, they can obviously be distinguished across all the bipartitions. For example, in bipartition $A|BC$, the qubit holders $BC$ can first perform local measurement $\{P_1^{(BC)} = |00\rangle\langle00| + |11\rangle\langle11|$,  $P_2^{(BC)} = |01\rangle\langle01| + |10\rangle\langle10|\}$ to reduced the states into two disjoint sets, with each being a $2$-ary subset of $\{|\psi_0\rangle, |\psi_7\rangle, |\psi_3\rangle, |\psi_4\rangle\}$ and $\{|\psi_1\rangle, |\psi_6\rangle, |\psi_2\rangle, |\psi_5\rangle\}$ respectively. Since any two states are always locally distinguishable \cite{Walgate00}, the $4$ states can be locally distinguished across bipartition $A|BC$. The same holds for $B|CA$ and $C|AB$; If the $4$ states to be distinguished are distributed on three vertices, they must be located on a certain face of the tetrahedron. Without loss of generality, suppose that these states are $\{|\psi_0\rangle, |\psi_7\rangle, |\psi_1\rangle, |\psi_2\rangle\}$. In this case, only in bipartition $A|BC$ can they be distinguished, for in $B|CA$ ($C|AB$), the subset $\{|\psi_0\rangle, |\psi_7\rangle, |\psi_2\rangle\}$ (resp. $\{|\psi_0\rangle, |\psi_7\rangle, |\psi_1\rangle\}$) is locally equivalent to three states among the Bell basis, which are known to be locally indistinguishable; If the $4$ states to be distinguished are on two vertices, they must be located on a certain edge. As an example, for $\{|\psi_0\rangle, |\psi_7\rangle, |\psi_3\rangle, |\psi_4\rangle\}$, they are not locally distinguishable through $A|BC$ while they can be distinguished through $B|CA$ and $C|AB$. To sum up, any $4$ states of the three-qubit GHZ basis (\ref{GHZ}) can be distinguished through at least one of the three bipartitions, i.e., they are not genuinely nonlocal.
\end{proof}

\begin{figure}[t]
\centering
\includegraphics[scale=0.45]{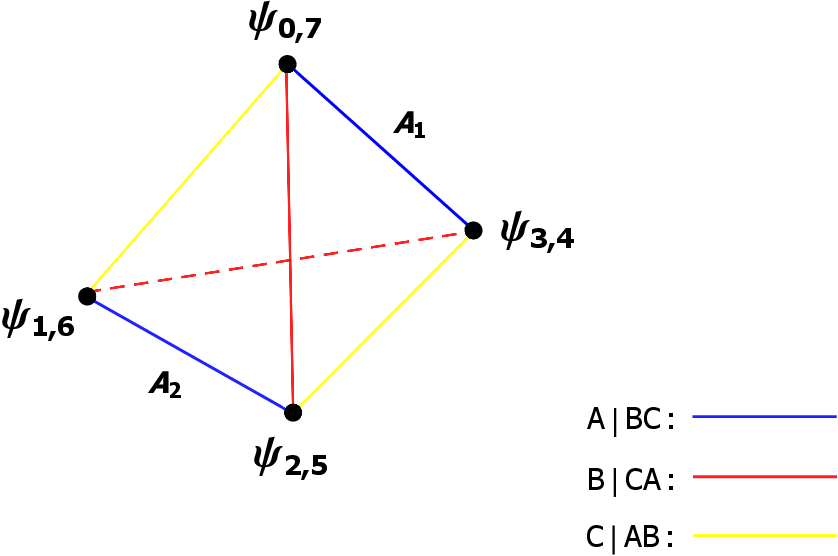}
\captionsetup{justification=raggedright}
\caption
{
A 3D schematic picture for local distinguishability of the three-qubit GHZ basis (\ref{GHZ}) in all bipartitions: \ Each vertex of the tetrahedron represents one conjugate pair of the basis. Four states on one edge are locally equivalent to the two-qubit Bell basis $\{|00\rangle \pm |11\rangle$,$|01\rangle \pm |10\rangle\}$, within the bipartition shown by the colour of that edge. \ For instance, in bipartition $A|BC$, the $4$ states of $\mathcal{A}_1$ are locally equivalent to the Bell basis and the same holds for $\mathcal{A}_2$. These two sets lie on the opposite edge of the tetrahedron colored blue, and states from different sets can be told apart if $BC$ perform 2- outcome joint measurement $\{P_1^{(BC)} = |00\rangle\langle00| + |11\rangle\langle11|,$ $P_2^{(BC)} = |01\rangle\langle01| + |10\rangle\langle10|\}$.
}\label{fig}
\end{figure}

Despite the nonexistence of genuinely nonlocal subsets of the three-qubit GHZ basis with cardinality $s < 2^2+1$, in systems of higher local dimension, we are using them to construct genuinely nonlocal sets of three-partite GHZ states with small cardinality --- much smaller than the trivial cardinality $d^2+1$ (indicated by Lemma \ref{lemma0}). What we are to discuss here are the special cases where the local dimension $d=2^t$, and the GHZ states are ``lattice-type'' ones, which are tensor products of the three-qubit GHZ basis (\ref{GHZ}). The idea here is primarily inspired by \cite{Cosentino13}, where small indistinguishable sets in bipartite systems had been construced using lattice-type maximally entangled states. Let $\overrightarrow{v} = (v_1, \cdots, v_t) \in \{0, 1, \cdots, 7\}^t$ be a $t$-element vector $(t \geq 1)$. Then the $t$-level \textit{lattice-type GHZ state} with index $\overrightarrow{v}$ in three-partite system $\mathbb{C}^{2^t} \otimes \mathbb{C}^{2^t} \otimes \mathbb{C}^{2^t}$ is defined by $|\Psi_{\overrightarrow{v}}\rangle = |\psi_{v_1}\rangle \otimes \cdots \otimes |\psi_{v_t}\rangle$. One can easily check that these $8^t$ states are $2^t$-dimensional GHZ states and they consist a set of orthogonal basis in $\mathbb{C}^{2^t} \otimes \mathbb{C}^{2^t} \otimes \mathbb{C}^{2^t}$. For abbreviation, here and after we also call this form of GHZ states ``GHZ-lattices''. It turns out when the multipartite states to be distinguished are all GHZ-lattices, the special structure of these states enable us to discuss their genuine nonlocality with much less efforts.

Since LOCC distinguishability is difficult to tackle mathematically, here instead we study the PPT-distinguishability of the multipartite states, for different bipartition respectively.  For bipartite system $\mathcal{H}_A \otimes \mathcal{H}_B$, if a set of orthogonal states is locally distinguishable, the states must be also PPT-distinguishable. To be more precise, the orthogonal bipartite states $\{|\phi_1\rangle,   \cdots,   |\phi_s\rangle\}$ (here and after, it's also denoted as $\{\phi_1,   \cdots,   \phi_s\}$ where $\phi_i = |\phi_i\rangle\langle \phi_i|$'s are the corresponding density operators) are PPT-dinstinguishable if and only if the following semidefinite program:
\begin{equation}\begin{split}
\alpha  =  & \max_{_{P_1,   \cdots,   P_s}} \frac1s \sum_{k=1}^s Tr(P_k \phi_k) \\
                        s.  t.   &\phantom{=}P_1 + \cdots + P_s = I_{_{AB}} \\
                             &\phantom{=}P_1,   \cdots,   P_s \geq 0 \\
                             &\phantom{=}\mathrm{T}_{A}(P_1),   \cdots,   \mathrm{T}_{A}(P_s) \geq 0
\end{split}\end{equation}
has optimal value $\alpha = 1$ \cite{Cosentino13}. Here we also apply this semidefinte program to construct genuinely nonlocal sets on multipartite systems. For a set of orthogonal quantum states $\mathcal{S} = \{ |\Phi_1\rangle,   \cdots,   |\Phi_s\rangle\}$ on three-partite system $\mathcal{H}_A \otimes \mathcal{H}_B \otimes \mathcal{H}_C$, a sufficient condition for their genuine nonlocality is that across each bipartition these states are PPT-indistinguishable. More explicitly, consider the family of semidefinite programs
\begin{equation}\begin{split}\label{primal}
\alpha_{X} & = \max_{_{P_1^{(X)},   \cdots,   P_s^{(X)}}} \frac1s \sum_{k=1}^s Tr(P_k^{(X)} \Phi_k ) \\
                        s.  t.   &\phantom{=}P_1^{(X)} + \cdots + P_s^{(X)} = I_{_{ABC}}, \\
                             &\phantom{=}P_1^{(X)},   \cdots,   P_s^{(X)} \geq 0, \\
                             &\phantom{=}\mathrm{T}_{X}(P_1^{(X)}),   \cdots,   \mathrm{T}_{X}(P_s^{(X)}) \geq 0
\end{split}\end{equation}
where $X = A, B$ or $C$ represents that we are dealing with the bipartition $A|BC$, $B|CA$ or $C|AB$ respectively. If for a specific set $\mathcal{S}$ we have $\alpha_{X} < 1$ for $\forall X \in \{A, B, C\}$, namely, PPT-indistinguishable through all the bipartitions, then the set is genuinely nonlocal. For convenience of calculation, we further consider the dual problems:
\begin{equation}\begin{split}\label{dual}
\beta_{X}  & =  \min_{_{Y^{(X)},   Q_1^{(X)},   \cdots,   Q_s^{(X)}}} \frac1s \ Tr(Y^{(X)}) \\
                        s.  t.   &\phantom{=}Y^{(X)} - \Phi_k \geq \mathrm{T}_X (Q_k^{(X)}),\\
                                  &\phantom{=}Q_k^{(X)} \geq 0  \ \ \ \ \ \ \ \ (1 \leq k \leq s)
\end{split}\end{equation}
where $X \in \{A, B, C\}$. By the Slater's condition,   we know that strong duality holds: $\alpha_{X} = \beta_{X} \ (X \in \{A, B, C\})$. Therefore, given a set $\mathcal{S} = \{ |\Phi_1\rangle,   \cdots,   |\Phi_s\rangle\}$ of orthogonal three-partite states, once we found that $\beta_{X} < 1$ for $\forall X \in \{A, B, C\}$,  then genuine nonlocality of $\mathcal{S}$ will be concluded.

To use these semidefinite programs to construct genuinely nonlocal sets of GHZ-lattices, notice that for subsystems $X \in \{A, B, C\}$, $\mathrm{T}_{X_1 \cdots X_t}(\Psi_{\overrightarrow{v}}) = \mathrm{T}_{X_1}(\psi_{v_1})\otimes \cdots \otimes \mathrm{T}_{X_t}(\psi_{v_t})$, where $\psi_{v_i} = |\psi_{v_i}\rangle\langle \psi_{v_i}|$ ($\Psi_{\overrightarrow{v}} = |\Psi_{\overrightarrow{v}}\rangle\langle \Psi_{\overrightarrow{v}}|$) are the corresponding density operators of the three-qubit GHZ basis (resp. $t$-level GHZ-lattices). The subscripts ``$_{X_1 \cdots X_t}$'' indicate that $t$ qubits are being held by $X = A, B$ or $C$ and we sometimes abbreviate it as ``$_X$'' when  no ambiguity occurs. We should also note that the partial-transpose operations $\mathrm{T}_X$'s are linear maps on $\mathcal{L}(\mathcal{H}_{A} \otimes \mathcal{H}_{B} \otimes \mathcal{H}_{C})$, the space of all linear operators on $\mathcal{H}_{A} \otimes \mathcal{H}_{B} \otimes \mathcal{H}_{C}$. Moreover, when GHZ-lattices  are considered, we have the following property about $\mathrm{T}_A, \mathrm{T}_B$ and $\mathrm{T}_C$ that is crucial for all subsequent discussions.

\begin{Lem}\label{lemma1}
In the linear space $\mathcal{L}(\mathbb{C}^2 \otimes \mathbb{C}^2 \otimes \mathbb{C}^2)$ of all linear operators on $\mathbb{C}^2 \otimes \mathbb{C}^2 \otimes \mathbb{C}^2$, the linear maps $\mathrm{T}_A$, $\mathrm{T}_B$ and $\mathrm{T}_C$ act invariantly on the subspace spanned by $\{\psi_0, \cdots, \psi_7\}$. Moreover, on this subspace they have matrix representations:
\begin{equation}\begin{split}\label{transform}
& \mathrm{T}_A\begin{bmatrix}
\psi_0 \\ \psi_7 \\ \psi_3 \\ \psi_4 \\ \psi_1 \\ \psi_6 \\ \psi_2 \\ \psi_5
\end{bmatrix}
= \arraycolsep = 0.01em \left[\begin{array}{cc}
\begin{matrix}
& &\frac12 & &\frac12 & &\frac12 &-&\frac12 \\
& &\frac12 & &\frac12 &-&\frac12 & &\frac12 \\
& &\frac12 &-&\frac12 & &\frac12 & &\frac12 \\
&-&\frac12 & &\frac12 & &\frac12 & &\frac12 \\
\end{matrix} & 0 \\
0 & \begin{matrix}
& &\frac12 & &\frac12 & &\frac12 &-&\frac12 \\
& &\frac12 & &\frac12 &-&\frac12 & &\frac12 \\
& &\frac12 &-&\frac12 & &\frac12 & &\frac12 \\
&-&\frac12 & &\frac12 & &\frac12 & &\frac12 \\
\end{matrix}
\end{array}\right] \begin{bmatrix}
\psi_0 \\ \psi_7 \\ \psi_3 \\ \psi_4 \\ \psi_1 \\ \psi_6 \\ \psi_2 \\ \psi_5
\end{bmatrix}, \\
\\
& \mathrm{T}_B\begin{bmatrix}
\psi_0 \\ \psi_7 \\ \psi_2 \\ \psi_5 \\ \psi_3 \\ \psi_4 \\ \psi_1 \\ \psi_6
\end{bmatrix}
= \arraycolsep = 0.01em \left[\begin{array}{cc}
\begin{matrix}
& &\frac12 & &\frac12 & &\frac12 &-&\frac12 \\
& &\frac12 & &\frac12 &-&\frac12 & &\frac12 \\
& &\frac12 &-&\frac12 & &\frac12 & &\frac12 \\
&-&\frac12 & &\frac12 & &\frac12 & &\frac12 \\
\end{matrix} & 0 \\
0 & \begin{matrix}
& &\frac12 & &\frac12 & &\frac12 &-&\frac12 \\
& &\frac12 & &\frac12 &-&\frac12 & &\frac12 \\
& &\frac12 &-&\frac12 & &\frac12 & &\frac12 \\
&-&\frac12 & &\frac12 & &\frac12 & &\frac12 \\
\end{matrix}
\end{array}\right] \begin{bmatrix}
\psi_0 \\ \psi_7 \\ \psi_2 \\ \psi_5 \\ \psi_3 \\ \psi_4 \\ \psi_1 \\ \psi_6
\end{bmatrix},\\
\\
& \mathrm{T}_C\begin{bmatrix}
\psi_0 \\ \psi_7 \\ \psi_1 \\ \psi_6 \\ \psi_2 \\ \psi_5 \\ \psi_3 \\ \psi_4
\end{bmatrix}
= \arraycolsep = 0.01em \left[\begin{array}{cc}
\begin{matrix}
& &\frac12 & &\frac12 & &\frac12 &-&\frac12 \\
& &\frac12 & &\frac12 &-&\frac12 & &\frac12 \\
& &\frac12 &-&\frac12 & &\frac12 & &\frac12 \\
&-&\frac12 & &\frac12 & &\frac12 & &\frac12 \\
\end{matrix} & 0 \\
0 & \begin{matrix}
& &\frac12 & &\frac12 & &\frac12 &-&\frac12 \\
& &\frac12 & &\frac12 &-&\frac12 & &\frac12 \\
& &\frac12 &-&\frac12 & &\frac12 & &\frac12 \\
&-&\frac12 & &\frac12 & &\frac12 & &\frac12 \\
\end{matrix}
\end{array}\right] \begin{bmatrix}
\psi_0 \\ \psi_7 \\ \psi_1 \\ \psi_6 \\ \psi_2 \\ \psi_5 \\ \psi_3 \\ \psi_4
\end{bmatrix}.
\end{split}\end{equation}
\end{Lem}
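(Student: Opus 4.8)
The plan is to prove the lemma by a direct computation that exploits the bit-string structure of the GHZ basis. I would first index the eight states by a complementary-string \emph{pair} together with a sign: the four pairs
$$P_0=\{\psi_0,\psi_7\},\quad P_1=\{\psi_1,\psi_6\},\quad P_2=\{\psi_2,\psi_5\},\quad P_3=\{\psi_3,\psi_4\}$$
collect the GHZ states built from the same unordered pair of complementary computational strings $\{|abc\rangle,|\overline{abc}\rangle\}$, the lower index being the $+$ combination and the higher index the $-$ combination. The whole computation rests on two elementary identities: $|u\rangle\langle u|+|v\rangle\langle v|=\Pi_++\Pi_-$ and $|u\rangle\langle v|+|v\rangle\langle u|=\Pi_+-\Pi_-$, where $\Pi_\pm$ project onto $(|u\rangle\pm|v\rangle)/\sqrt2$.

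Next I would compute $\mathrm{T}_A(\psi)$ for a generic projector $\psi=|\psi\rangle\langle\psi|$ with $|\psi\rangle=(|abc\rangle+s\,|\overline{abc}\rangle)/\sqrt2$, $s=\pm1$. Expanding $\psi$ into two diagonal terms and two coherence terms and transposing only the first qubit, the diagonal terms $|abc\rangle\langle abc|$ and $|\overline{abc}\rangle\langle\overline{abc}|$ are unchanged (their $A$-indices already agree), while the coherence $|abc\rangle\langle\overline{abc}|$ becomes $|\overline a bc\rangle\langle a\overline b\overline c|$. The crucial point is that $a\overline b\overline c=\overline{\overline a bc}$, so the transposed coherence is again of complementary-string type, now belonging to the string $\overline a bc$ obtained by flipping qubit $A$. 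Applying the two identities, the diagonal part recombines into the $\pm$ members of the pair of $abc$, and the coherence part recombines, with a relative minus sign, into the $\pm$ members of the pair of $\overline a bc$; explicitly $\mathrm{T}_A(\psi)=\tfrac12\big[(\Pi_+^{abc}+\Pi_-^{abc})+s(\Pi_+^{\overline a bc}-\Pi_-^{\overline a bc})\big]$. This simultaneously proves that $\mathrm{T}_A$ preserves the span of $\{\psi_0,\dots,\psi_7\}$ and that it couples only the two complementary pairs swapped by flipping qubit $A$.

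It then remains to record which pairs are coupled. Flipping qubit $A$ sends $P_0\leftrightarrow P_3$ and $P_1\leftrightarrow P_2$; flipping $B$ sends $P_0\leftrightarrow P_2$ and $P_1\leftrightarrow P_3$; flipping $C$ sends $P_0\leftrightarrow P_1$ and $P_2\leftrightarrow P_3$ — in each case a fixed-point-free involution on the four pairs, which is exactly what produces the two $4\times4$ diagonal blocks in the three stated orderings. Reading off the coefficients from the generic formula (e.g. $\mathrm{T}_A\psi_0=\tfrac12(\psi_0+\psi_7+\psi_3-\psi_4)$, $\mathrm{T}_A\psi_7=\tfrac12(\psi_0+\psi_7-\psi_3+\psi_4)$, and so on) fills in one $4\times4$ block; the other block is identical because the computation only ever sees the abstract $(\text{same pair},\text{flipped pair})$ structure, and the cases $\mathrm{T}_B,\mathrm{T}_C$ follow from the same computation with qubit $A$ replaced by $B$ or $C$.

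The only genuine difficulty is the sign bookkeeping. One must match $\Pi_-$ of a flipped string to the correctly-numbered basis element — for instance $\Pi_-$ of the string $111$ is the projector $\psi_7$ even though its defining ket differs by an overall sign, which is harmless for a projector — and then check that every minus sign lands in the position dictated by the stated matrix. A convenient global consistency check is that the displayed $4\times4$ block must square to the identity, since each $\mathrm{T}_X$ is an involution; verifying this (indeed $M^2=I$ for the stated $M$) confirms that the sign pattern has been transcribed correctly.
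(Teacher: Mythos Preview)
Your proof is correct and follows essentially the same approach as the paper: a direct computation in the computational basis, using the identities that rewrite diagonal and off-diagonal terms of a rank-one projector in terms of the $\pm$ projectors. The paper verifies one representative entry explicitly (e.g.\ $\mathrm{T}_A(\psi_0)=\tfrac12(\psi_0+\psi_7+\psi_3-\psi_4)$ via $\mathrm{T}_A(\psi_0)+\psi_4=\tfrac12(\psi_0+\psi_7+\psi_3+\psi_4)$) and declares the rest routine, whereas you package the identical computation into the single generic formula $\mathrm{T}_A(\psi)=\tfrac12\big[(\Pi_+^{abc}+\Pi_-^{abc})+s(\Pi_+^{\overline a bc}-\Pi_-^{\overline a bc})\big]$, which has the advantage of explaining the $4\times4$ block structure and the identical form of the three matrices conceptually rather than by inspection.
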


The proof is straightforward by routine calculation, which is shown in the Appendix. Now starting from the case $t=1$, consider the subset $\mathcal{S}_{5} = \{\psi_0, \psi_1, \psi_2, \psi_3, \psi_4\}$ of the three-qubit GHZ basis, namely $\Phi_k = \psi_{k-1} \ (k=1, \cdots, 5)$. By the matrix representations (\ref{transform}) in Lemma \ref{lemma1}, one can check what we list in the following is a set of feasible solutions to the family of dual problems (\ref{dual}) (namely, the constraints $Y^{(X)} - \Phi_k \geq \mathrm{T}_{X}(Q_k^{(X)})$ are satisfied for $1 \leq k \leq 5$ and $X \in \{A, B, C\}$):
\begin{equation}\begin{split}\label{s51}
& Y^{(A)} = \frac12\psi_0 +  \psi_1 + \psi_2 + \frac12 \psi_3 + \frac12 \psi_4 + \frac12 \psi_7,\\
& Q_1^{(A)} = \psi_4, Q_2^{(A)} = Q_3^{(A)} = 0, Q_4^{(A)} = \psi_7, Q_5^{(A)} = \psi_0;\\
\\
& Y^{(B)} = \psi_0 + \frac12 \psi_1 + \psi_2 + \frac12 \psi_3 + \frac12 \psi_4 + \frac12 \psi_6,\\
& Q_1^{(B)} = 0, Q_2^{(B)} = \psi_4, Q_3^{(B)} = 0, Q_4^{(B)} = \psi_6,  Q_5^{(B)} = \psi_1;\\
\\
& Y^{(C)} = \psi_0 + \psi_1  + \frac12 \psi_2 + \frac12 \psi_3  + \frac12 \psi_4 + \frac12 \psi_5,\\
& Q_1^{(C)} = Q_2^{(C)} = 0, Q_3^{(C)} = \psi_4, Q_4^{(C)} = \psi_5,  Q_5^{(C)} = \psi_2.
\end{split}\end{equation}
Here as an example, we check only the inequality constraints for bipartition $A|BC$ and the others are similar by routine calculation:
\begin{equation*}\begin{split}
\mathrm{T}_A(Q_1^{(A)}) & = -\frac12\psi_0 + \frac12\psi_7 + \frac12\psi_3 + \frac12\psi_4 \\
& < -\frac12\psi_0 + \frac12 \psi_7 + \frac12 \psi_3 + \frac12 \psi_4 + (\psi_1 + \psi_2)\\
& = Y^{(A)} - \psi_0, \\
\end{split}\end{equation*}
\begin{equation*}\begin{split}
\mathrm{T}_A(Q_2^{(A)}) & = 0 < \frac12\psi_0  + \psi_2 + \frac12 \psi_3 + \frac12 \psi_4 + \frac12 \psi_7 \\
& = Y^{(A)} - \psi_1, \\
\end{split}\end{equation*}
\begin{equation}\begin{split}\label{s52}
\mathrm{T}_A(Q_3^{(A)}) & = 0 < \frac12\psi_0  + \psi_1 + \frac12 \psi_3 + \frac12 \psi_4 + \frac12 \psi_7 \\
& = Y^{(A)} - \psi_2, \\
\end{split}\end{equation}
\begin{equation*}\begin{split}
\mathrm{T}_A(Q_4^{(A)}) & = \frac12\psi_0 + \frac12\psi_7 - \frac12\psi_3 + \frac12\psi_4 \\
& < \frac12\psi_0 + \frac12 \psi_7 - \frac12 \psi_3 + \frac12 \psi_4 + (\psi_1 + \psi_2)\\
& = Y^{(A)} - \psi_3, \\
\end{split}\end{equation*}
\begin{equation*}\begin{split}
\mathrm{T}_A(Q_5^{(A)}) & = \frac12\psi_0 + \frac12\psi_7 + \frac12\psi_3 - \frac12\psi_4 \\
& < \frac12\psi_0 + \frac12 \psi_7 + \frac12 \psi_3 - \frac12 \psi_4 + (\psi_1 + \psi_2) \\
& = Y^{(A)} - \psi_4.
\end{split}\end{equation*}
The target values for the three bipartitions are $4/5, 4/5, 4/5$ respectively. Therefore, the set $\mathcal{S}_{5}$ is genuinely nonlocal, which consists with our former discussion. Now with these, we give a simple construction of a genuinely nonlocal set of GHZ-lattices in $\mathbb{C}^4 \otimes \mathbb{C}^4 \otimes \mathbb{C}^4$, of which the cardinality is considerably smaller than $4^2+1$ (which is a trivial cardinality by Lemma \ref{lemma0}).
\bigskip
\begin{Prop}\label{s10}
In $\mathbb{C}^4 \otimes \mathbb{C}^4 \otimes \mathbb{C}^4$, the following set of orthogonal lattice-type GHZ states
\begin{equation*}\begin{split}
\mathcal{S}_{10} = \{ \psi_0 \otimes \psi_0, \psi_1 \otimes \psi_0, \psi_2 \otimes \psi_0, \psi_3 \otimes \psi_0, \psi_4 \otimes \psi_0,\\
\psi_0 \otimes \psi_7, \psi_1 \otimes \psi_7, \psi_2 \otimes \psi_7, \psi_3 \otimes \psi_7, \psi_4 \otimes \psi_7 \}
\end{split}\end{equation*}
is genuinely nonlocal.
\end{Prop}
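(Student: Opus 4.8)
The plan is to prove genuine nonlocality by producing, for each bipartition $X\in\{A,B,C\}$, a \emph{feasible} (not necessarily optimal) point of the dual program (\ref{dual}) whose objective $\beta_X$ is strictly below $1$; since PPT-indistinguishability implies LOCC-indistinguishability, establishing $\beta_X<1$ for all three $X$ yields the claim. The structural observation that drives everything is that $\mathcal{S}^{10}$ is a product set: regarding the three-qubit GHZ state carried on the \emph{first} qubit of each party as the ``first block'' and that on the \emph{second} qubit as the ``second block'', we have $\mathcal{S}^{10}=\mathcal{S}^{5}\otimes\{\psi_0,\psi_7\}$, where $\mathcal{S}^{5}=\{\psi_1,\dots,\psi_5\}$ is precisely the set whose explicit dual-feasible solution $\bigl(Y^{(X)},\{Q_k^{(X)}\}_{k=1}^{5}\bigr)$ was already displayed and verified above, with objective values $0.8,0.8,0.6$ for $X=A,B,C$. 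The idea is to lift that solution to the full system $\mathbb{C}^4\otimes\mathbb{C}^4\otimes\mathbb{C}^4$.

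The main obstacle, and the only genuinely delicate point, is that a naive tensoring of dual solutions fails: by Lemma \ref{lemma1} the partial transposes $\mathrm{T}_X(\psi_0)$ and $\mathrm{T}_X(\psi_7)$ of the second-block states are indefinite, so they cannot be used as the positive dual slack operators $Q$, and the cross terms produced by a product ansatz are not positive semidefinite. What rescues the construction is that the \emph{sum} $\psi_0+\psi_7=\ketbra{000}{000}+\ketbra{111}{111}$ is a diagonal, fully classical operator and is therefore invariant under every partial transpose, $\mathrm{T}_X(\psi_0+\psi_7)=\psi_0+\psi_7\ge 0$ for all $X$. Tensoring the second block against this transpose-invariant operator is what makes the two-state second block ``harmless'' even though on its own it is perfectly distinguishable.

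Concretely, for each $X$ and each pair $(k,\epsilon)$ with $k\in\{1,\dots,5\}$ and $\epsilon\in\{0,7\}$, I would set
\[ \widetilde{Y}^{(X)}=Y^{(X)}\otimes(\psi_0+\psi_7),\qquad \widetilde{Q}^{(X)}_{k,\epsilon}=Q^{(X)}_k\otimes(\psi_0+\psi_7), \]
both of which are positive semidefinite since $Q^{(X)}_k\ge 0$ and $\psi_0+\psi_7\ge 0$. Using the factorization $\mathrm{T}_X(\psi_k\otimes\psi_\epsilon)=\mathrm{T}_X(\psi_k)\otimes\mathrm{T}_X(\psi_\epsilon)$ together with the transpose-invariance above, the dual constraint reduces to
\[ \widetilde{Y}^{(X)}-\psi_k\otimes\psi_\epsilon-\mathrm{T}_X(\widetilde{Q}^{(X)}_{k,\epsilon})=\psi_k\otimes(\psi_0+\psi_7-\psi_\epsilon)+C^{(X)}_k\otimes(\psi_0+\psi_7), \]
where $C^{(X)}_k:=Y^{(X)}-\psi_k-\mathrm{T}_X(Q^{(X)}_k)\ge 0$ is exactly the already-verified slack of the $\mathcal{S}^{5}$ solution. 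Since $\psi_0+\psi_7-\psi_\epsilon$ equals $\psi_7$ or $\psi_0$ (hence is positive semidefinite) and $\psi_0+\psi_7\ge0$, the right-hand side is a sum of positive-semidefinite operators, so feasibility holds for every $(k,\epsilon)$ and every $X$.

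Finally, because $\mathrm{Tr}(\psi_0+\psi_7)=2$ the objective is $\beta_X=\frac{1}{10}\mathrm{Tr}(\widetilde{Y}^{(X)})=\frac{2}{10}\mathrm{Tr}(Y^{(X)})=\frac{1}{5}\mathrm{Tr}(Y^{(X)})$, i.e.\ exactly the $\mathcal{S}^{5}$ values $0.8,0.8,0.6$, all strictly below $1$. Hence $\mathcal{S}^{10}$ is PPT-indistinguishable across $A|BC$, $B|CA$ and $C|AB$, and therefore genuinely nonlocal. I expect the write-up to be short: once the transpose-invariant multiplier $\psi_0+\psi_7$ is identified, the remaining work is the positive-semidefinite bookkeeping displayed above, with no new SDP to solve.
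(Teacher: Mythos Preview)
Your proposal is correct and follows essentially the same approach as the paper: both lift the already-verified dual solution for $\mathcal{S}^5$ to $\mathbb{C}^4\otimes\mathbb{C}^4\otimes\mathbb{C}^4$ by tensoring with the transpose-invariant operator $\psi_0+\psi_7$, yielding the same feasible points $\widetilde{Y}^{(X)}=Y^{(X)}\otimes(\psi_0+\psi_7)$, $\widetilde{Q}^{(X)}_{k,\epsilon}=Q_k^{(X)}\otimes(\psi_0+\psi_7)$ and the same objective values $0.8,0.8,0.6$. Your decomposition of the slack into $\psi_k\otimes(\psi_0+\psi_7-\psi_\epsilon)+C_k^{(X)}\otimes(\psi_0+\psi_7)$ is slightly more explicit than the paper's one-line estimate, but the key observation $\mathrm{T}_X(\psi_0+\psi_7)=\psi_0+\psi_7$ and the overall argument are identical.
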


\begin{proof}
Notice that by matrix representations (\ref{transform}), we have $\mathrm{T}_{X}(\psi_0 + \psi_7) = \psi_0 + \psi_7$ for $\forall X \in \{A, B, C\}$. From the above discussion, the constraints $Y^{(X_1)} - \psi_{k-1} \geq \mathrm{T}_{X_1}(Q_k^{(X_1)}) $ are satisfied for $1 \leq k \leq 5$ and $X_1 \in \{A_1, B_1, C_1\}$, so we have
\begin{equation*}\begin{split}
&Y^{(X_1)} \otimes (\psi_0 + \psi_7) - \psi_{k-1} \otimes (\psi_0 + \psi_7) \\
\geq & \ \mathrm{T}_{X_1}(Q_k^{(X_1)}) \otimes (\psi_0 + \psi_7) \\
= & \ \mathrm{T}_{X_1X_2}[Q_k^{(X_1)} \otimes (\psi_0 + \psi_7)].
\end{split}\end{equation*}
If we set $Y^{(X_1X_2)} =Y^{(X_1)} \otimes (\psi_0 + \psi_7)$ and  $Q_k^{(X_1X_2)} = Q_{k+5}^{(X_1X_2)} = Q_k^{(X_1)} \otimes (\psi_0 + \psi_7) \ (1 \leq k \leq 5)$, then
$$Y^{(X_1X_2)} - \Psi_j > \mathrm{T}_{X_1X_2}(Q_j^{(X_1X_2)}) \ \ (j = 1, \cdots, 10)$$
also satisfy the inequality constraints of dual problems (\ref{dual}), where
\begin{equation*}
\Psi_j = \left\{
\begin{array}{ll}
\psi_{j-1} \otimes \psi_0, & 1 \leq j \leq 5  \\
\psi_{j-6} \otimes \psi_7, & 6 \leq j \leq 10
\end{array} \right.
\end{equation*}
are just elements of $\mathcal{S}_{10}$. Since $Tr(Y^{(A)}) = Tr(Y^{(B)}) = Tr(Y^{(C)}) = 8$, all the minimums $\beta_X \ (X \in \{A, B, C\})$ are smaller than 1. This means that $\mathcal{S}_{10} = \{|\Psi_1\rangle, \cdots, |\Psi_{10}\rangle\}$ is genuinely nonlocal.
\end{proof}

The readers may be curious of the above construction procedure about $\mathcal{S}_{5}$ and $\mathcal{S}_{10}$, for the feasible solutions to semidefinte programs (\ref{dual}) we've presented (the $Y^{(X)}$'s and the $Q_k^{(X)}$'s) are all lattice-type operators (diagonal in the basis $\{|\psi_0\rangle, \cdots, |\psi_7\rangle\}$ or their tensor products). To explain why, we present the following lemma that is an analogue to Theorem 2 of \cite{Cosentino13}. It turns out when GHZ-lattices are considered, any set of feasible solutions to the semidefinite programs (\ref{dual}) corresponds to a set of lattice-type ones that have the same target values.
\bigskip
\begin{Lem}\label{lemma2}
If the states $|\Phi_1\rangle, \cdots, |\Phi_s\rangle$ to be distinguished are all GHZ-lattices, then the semidefinite programs (\ref{dual}) [and (\ref{primal}) likewise] can be reduced to linear programs.
\end{Lem}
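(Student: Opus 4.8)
The plan is to exploit the stabilizer symmetry of the three-qubit GHZ basis to show that, among all feasible solutions of (\ref{dual}), there is always one whose variables are \emph{diagonal} in the GHZ-lattice basis $\{\Psi_{\overrightarrow v}\}$, and then to observe that once every variable is diagonal each semidefinite constraint decouples into finitely many scalar inequalities. The resulting problem has linear objective and linear constraints in the diagonal entries, i.e.\ it is a linear program; the same reasoning applied to the POVM elements reduces (\ref{primal}) as well.

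First I would construct the symmetrizing group. The eight states (\ref{GHZ}) form a stabilizer basis: they are the common eigenvectors of the abelian group $G$ generated by the mutually commuting Pauli operators $Z_AZ_B$, $Z_BZ_C$ and $X_AX_BX_C$, each $|\psi_i\rangle$ being a $\pm1$-eigenvector of all three generators. Since the eight joint eigenspaces are one-dimensional, the commutant of $G$ is exactly $\spn\{\psi_0,\dots,\psi_7\}$, so the twirl $\mathcal{T}(\rho)=\frac{1}{|G|}\sum_{g\in G} g\rho g^{\dagger}$ is the trace-preserving orthogonal projection onto this subspace and fixes each $\psi_i$. For the $t$-level system I take the product group $\widetilde G=G^{\otimes t}$, acting with one copy of $G$ on each level; its twirl $\widetilde{\mathcal T}=\mathcal T^{\otimes t}$ is then the projection onto $V\defeq\spn\{\Psi_{\overrightarrow v}:\overrightarrow v\in\{0,\dots,7\}^t\}$ and fixes every $\Phi_k$. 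Two properties drive the reduction: (i) being a convex mixture of unitary conjugations, $\widetilde{\mathcal T}$ is completely positive and unital, so it preserves positivity; (ii) $\widetilde{\mathcal T}$ commutes with each partial transpose $\mathrm{T}_X$, $X\in\{A,B,C\}$ (where $\mathrm{T}_X$ transposes all $t$ qubits of subsystem $X$). Property (ii) is the technical heart: every $g\in\widetilde G$ is, up to a sign, a tensor product of single-qubit Paulis, so its restriction $g_X$ to subsystem $X$ satisfies $g_X^{\mathrm{T}}=\pm g_X$; writing $g$ as the tensor of $g_X$ with its complementary part and using that the transpose is an anti-homomorphism, one finds $\mathrm{T}_X(g\rho g^{\dagger})=g\,\mathrm{T}_X(\rho)\,g^{\dagger}$ for each single $g$ (the two equal sign factors cancel), and averaging over $\widetilde G$ gives $\widetilde{\mathcal T}\circ\mathrm{T}_X=\mathrm{T}_X\circ\widetilde{\mathcal T}$. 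Now, given any feasible $(Y^{(X)},Q_1^{(X)},\dots,Q_s^{(X)})$ for (\ref{dual}), I apply $\widetilde{\mathcal T}$ to every variable: positivity of the $Q_k^{(X)}$ survives by (i); applying $\widetilde{\mathcal T}$ to $Y^{(X)}-\Phi_k\geq\mathrm{T}_X[Q_k^{(X)}]$ and using $\widetilde{\mathcal T}(\Phi_k)=\Phi_k$ together with (ii) yields $\widetilde{\mathcal T}(Y^{(X)})-\Phi_k\geq\mathrm{T}_X[\widetilde{\mathcal T}(Q_k^{(X)})]$, so the symmetrized tuple is again feasible; and since $\widetilde{\mathcal T}$ is trace-preserving, $\frac1s\mathrm{Tr}[Y^{(X)}]$ is unchanged. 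Hence $\beta_X$ is attained on variables lying in $V$.

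It remains to restrict (\ref{dual}) to $V$ and read off a linear program. Writing $Y^{(X)}=\sum_{\overrightarrow v}y_{\overrightarrow v}\Psi_{\overrightarrow v}$ and $Q_k^{(X)}=\sum_{\overrightarrow v}q^{(k)}_{\overrightarrow v}\Psi_{\overrightarrow v}$, orthogonality of the rank-one projectors $\Psi_{\overrightarrow v}$ makes $Q_k^{(X)}\geq0$ equivalent to $q^{(k)}_{\overrightarrow v}\geq0$ for all $\overrightarrow v$. By Lemma \ref{lemma1} applied levelwise, $\mathrm{T}_X(\Psi_{\overrightarrow v})=\sum_{\overrightarrow w}M^{(X)}_{\overrightarrow w\overrightarrow v}\Psi_{\overrightarrow w}$, where $M^{(X)}$ is the $t$-fold tensor power of the explicit $8\times8$ matrix in (\ref{transform}); hence $\mathrm{T}_X[Q_k^{(X)}]$ is again diagonal with coefficients linear in the $q^{(k)}_{\overrightarrow v}$, the operator inequality $Y^{(X)}-\Phi_k\geq\mathrm{T}_X[Q_k^{(X)}]$ collapses to one scalar inequality per index $\overrightarrow w$, and $\frac1s\mathrm{Tr}[Y^{(X)}]=\frac1s\sum_{\overrightarrow v}y_{\overrightarrow v}$. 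Everything is linear in the real scalars $y_{\overrightarrow v},q^{(k)}_{\overrightarrow v}$, so (\ref{dual}) is a linear program, and the identical twirling of the $P_k^{(X)}$ does the same for (\ref{primal}).

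I expect the main obstacle to be establishing property (ii) cleanly and simultaneously for all three bipartitions: a generic unitary twirl would \emph{not} commute with $\mathrm{T}_X$, and it is only the Pauli (stabilizer) nature of $\widetilde G$ --- the fact that each group element restricts on every subsystem to a Pauli whose transpose is $\pm$ itself --- that lets conjugation be pushed through the PPT constraints without altering the feasible region. The remaining steps (identifying the commutant, checking trace preservation, and expanding in the $\Psi_{\overrightarrow v}$ basis) are routine once (ii) is in hand.
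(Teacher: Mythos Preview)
Your proposal is correct and follows essentially the same route as the paper: the paper introduces the eight Pauli products $C=\{\mathrm{III},\mathrm{XXX},\mathrm{IZZ},\mathrm{ZIZ},\mathrm{ZZI},\mathrm{XYY},\mathrm{YXY},\mathrm{YYX}\}$ (which is exactly the group you generate from $Z_AZ_B,Z_BZ_C,X_AX_BX_C$, up to irrelevant overall signs), defines the twirl $\Delta(\rho)=\frac{1}{|C|}\sum_{U\in C}U\rho U^{\dagger}$, verifies it is the GHZ-basis dephasing map, checks $\mathrm{T}_X\circ\Delta=\Delta\circ\mathrm{T}_X$ via the same $g_X^{\mathrm T}=\pm g_X$ sign-cancellation you describe, tensors to $\Lambda=\Delta^{\otimes t}$, and then applies $\Lambda$ to an arbitrary feasible solution of (\ref{dual}) to obtain a diagonal one with the same objective, yielding the linear program (\ref{linear}). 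Your more group-theoretic phrasing (commutant, stabilizer basis) is a clean repackaging of the same argument.
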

\begin{proof}
We check here the dual problem case and the primal case is likewise. First, note that the set of three-qubit GHZ basis (\ref{GHZ}) is the common eigenbasis of the following set of commutative product operators on $\mathbb{C}^2 \otimes \mathbb{C}^2 \otimes \mathbb{C}^2$:
$$C = \{\mathrm{III}, \mathrm{XXX}, \mathrm{IZZ}, \mathrm{ZIZ}, \mathrm{ZZI}, \mathrm{XYY}, \mathrm{YXY}, \mathrm{YYX}\},$$
where $\{\mathrm{X}, \mathrm{Y}, \mathrm{Z}\}$ are the one-qubit Pauli operators. More explicitly, they have spetrum decomposition:
\begin{equation}\begin{array}{cl}\label{spectrum}
 \mathrm{III} & = \phantom{+}\psi_0 + \psi_1 + \psi_2 + \psi_3 + \psi_4 + \psi_5 + \psi_6 + \psi_7,\\
 \mathrm{XXX} & = \phantom{+}\psi_0 + \psi_1 + \psi_2 + \psi_3 - \psi_4 - \psi_5 - \psi_6 - \psi_7,\\
 \mathrm{IZZ} & = \phantom{+}\psi_0 - \psi_1 - \psi_2 + \psi_3 + \psi_4 - \psi_5 - \psi_6 + \psi_7,\\
 \mathrm{ZIZ} & = \phantom{+}\psi_0 - \psi_1 + \psi_2 - \psi_3 - \psi_4 + \psi_5 - \psi_6 + \psi_7,\\
 \mathrm{ZZI} & = \phantom{+}\psi_0 + \psi_1 - \psi_2 - \psi_3 - \psi_4 - \psi_5 + \psi_6 + \psi_7,\\
 \mathrm{XYY} & = - \psi_0 + \psi_1 + \psi_2 - \psi_3 + \psi_4 - \psi_5 - \psi_6 + \psi_7,\\
 \mathrm{YXY} & = - \psi_0 + \psi_1 - \psi_2 + \psi_3 - \psi_4 + \psi_5 - \psi_6 + \psi_7,\\
 \mathrm{YYX} & = - \psi_0 - \psi_1 + \psi_2 + \psi_3 - \psi_4 - \psi_5 + \psi_6 + \psi_7.
\end{array}\end{equation}
Consider the quantum operation
\begin{equation}\label{channel}
\Delta(\rho) = \frac{1}{|C|} \sum_{U_i \in C} U_i \rho U_{i}^{\dag}
\end{equation}
that acts on system $\mathbb{C}^2 \otimes \mathbb{C}^2 \otimes \mathbb{C}^2$. Substituting (\ref{spectrum}) into (\ref{channel}), one can check that all the cross terms in expression eliminate and hence $\Delta$ acts as the dephasing operation under the three-qubit GHZ basis:
\begin{equation} \Delta(\rho) = \sum_{i=0}^{7} |\psi_{i}\rangle\langle\psi_{i}| \rho |\psi_{i}\rangle\langle\psi_{i}|.\end{equation}
It's obvious that $\Delta$ is positive and trace-preserving. What's more, it commutes with the partial transpose operations $\mathrm{T}_{X}$'s $ (X \in \{A, B, C\})$. To see this, just write $\Delta$ in the form (\ref{channel}), and check if each $\mathrm{T}_X$ commutes with each term $U_i \rho U_i^{\dag}$. This is routine, for example
\begin{equation}\begin{split}
\mathrm{T}_B(\mathrm{YYX} \cdot \rho \cdot \mathrm{YYX}) & = \mathrm{YY^{T}X} \cdot \mathrm{T}_B(\rho) \cdot \mathrm{YY^{T}X} \\
& = (-\mathrm{YYX})\cdot \mathrm{T}_B(\rho)\cdot(-\mathrm{YYX}) \\
& = \mathrm{YYX} \cdot \mathrm{T}_B(\rho)\cdot \mathrm{YYX},
\end{split}\end{equation}
and so on. Hence, $\mathrm{T}_X[\Delta(\rho)] = \Delta[\mathrm{T}_X(\rho)]$ for $X \in \{A, B, C\}$. Now let $\Lambda = \Delta^{\otimes t}$, that is
\begin{equation*}
\Lambda(\gamma) = \sum_{i_1 = 0}^7 \cdots\sum_{i_t = 0}^7 \gamma_{i_1 \cdots i_t} \cdot |\psi_{i_1}\cdots\psi_{i_t}\rangle\langle\psi_{i_1}\cdots\psi_{i_t}|,
\end{equation*}
where $\gamma$ is any state in tripartite system $\mathbb{C}^{2^t} \otimes \mathbb{C}^{2^t} \otimes \mathbb{C}^{2^t}$ and $\gamma_{i_1 \cdots i_t} = \langle\psi_{i_1}\cdots\psi_{i_t}| \gamma |\psi_{i_1}\cdots\psi_{i_t}\rangle$. It also acts as the dephasing operation under basis $\{|\psi_{i_1}\rangle \otimes \cdots \otimes |\psi_{i_t}\rangle \}_{i_1, \cdots, i_t}$ and has the same property as $\Delta(\cdot)$. Since this operation is invariant on the GHZ-lattices $\Phi_k = \psi_{k_1} \otimes \cdots \otimes \psi_{k_t}$, by acting it on any specific feasible solution of (\ref{dual}), one can achieve another feasible solution consisting only of diagonal operators while the target values stay still. That is to say, the semidefinite programs (\ref{dual}) reduce to linear programs
\begin{equation}\begin{split}\label{linear}
\beta_X  = &  \min  \frac1s \sum_{l_1=0}^7 \cdots \sum_{l_t=0}^7 (\overrightarrow{y^{_{(X)}}})_{l_1\cdots l_t}  \\
                s.  t. \ \ & (\overrightarrow{y^{_{(X)}}})_{i_1\cdots i_t}  - \sum_{l_1\cdots l_t}(T^{_{(X)}})^{\otimes t}_{i_1\cdots i_t,  l_1\cdots l_t} \cdot (\overrightarrow{q_k^{(X)}})_{l_1\cdots l_t}  \\
                & \geq \delta_{i_1\cdots i_t, k_1\cdots k_t}, \\
                             &  (\overrightarrow{q_k^{(X)}})_{i_1\cdots i_t} \geq 0 \\
        &\phantom{(\overrightarrow{q_k^{(X)}})_{i_1\cdots i_t}} (i_1,\cdots, i_t \in \{0, \cdots, 7\}; 1 \leq k \leq s )
\end{split}\end{equation}
where $X \in\{A, B, C\}$. Therein, ``$\delta$'' is the Kronecker delta and we suppose that the GHZ-lattices $|\Phi_k\rangle$ have index $k_1, \cdots, k_t \ (1 \leq k \leq s)$; The $(\overrightarrow{y^{_{(X)}}})$'s, $(\overrightarrow{q_k^{_{(X)}}})$'s ($1 \leq k \leq s, \ X \in \{A, B, C\}$) are $8^t \times 1$ vectors representing the diagonal elements of $Y^{(X)}$'s and $Q_k^{(X)}$'s in problem (\ref{dual}) respectively (which are the optimization variables); $\mathrm{T}^{(X)}$'s $(X \in \{A, B, C\})$ are the tranformation matrices of linear maps $\mathrm{T}_X$'s under basis $
\{\psi_0, \cdots, \psi_7\}$, as shown by (\ref{transform}) in Lemma \ref{lemma1} (but with the basis rearranged in ascending order).

\end{proof}

This result greatly reduces the complexity of the problem, for all operators we need to consider now are just diagonal (lattice-type) ones. We should also point out here that the feasible solutions we presented in the case $\mathcal{S}_{5}$  are actually optimal.  Moreover, by running the linear programs (\ref{linear}) numerically, we found the existence of genuinely nonlocal subsets of $\mathcal{S}_{10}$ with cardinality down to 7. In other words, the genuine nonlocality of $\mathcal{S}_{10}$ can actually be derived from its subset. 

\begin{Thm}\label{s7}
In $\mathbb{C}^4 \otimes \mathbb{C}^4 \otimes \mathbb{C}^4$, the following $7$-ary subset of $\mathcal{S}_{10}$:
\begin{equation*}\begin{split}
\mathcal{S}_{7} = \{ & \psi_0 \otimes \psi_0, \psi_1 \otimes \psi_0, \psi_2 \otimes \psi_0, \\
& \psi_3 \otimes \psi_0, \psi_4 \otimes \psi_0, \psi_3 \otimes \psi_7, \psi_4 \otimes \psi_7 \} \\
\end{split}\end{equation*}
already has the property of genuine nonlocality ($\psi_0 \otimes \psi_7, \psi_1 \otimes \psi_7, \psi_2 \otimes \psi_7$ from $\mathcal{S}_{10}$ have been excluded).
\end{Thm}

The proof is placed in the Appendix, where we present a set of feasible solutions to the dual problems (\ref{dual}) with target values smaller than $1$. Notably, although these solutions come out of numerical evaluation, it turns out that they can also be constructed inductively from the feasible solutions (\ref{s51}) for $\mathcal{S}_{5}$. This inspire us to further construct such genuinely nonlocal sets inductively when $t>2$, simply duplicating the procedure from $\mathcal{S}_{5}$ to $\mathcal{S}_{7}$. That is to say, in $\mathbb{C}^{8} \otimes \mathbb{C}^{8} \otimes \mathbb{C}^{8}$ we will achieve
\begin{equation*}\begin{split}
\mathcal{S}_{11} = \{ & \psi_0 \otimes \psi_0 \otimes \psi_0, \psi_1 \otimes \psi_0 \otimes \psi_0, \psi_2 \otimes \psi_0 \otimes \psi_0, \\
& \psi_3 \otimes \psi_0 \otimes \psi_0, \psi_4 \otimes \psi_0 \otimes \psi_0, \psi_3 \otimes \psi_7 \otimes \psi_0, \\
& \psi_4 \otimes \psi_7 \otimes \psi_0, \psi_3 \otimes \psi_0 \otimes \psi_7, \psi_4 \otimes \psi_0 \otimes \psi_7, \\
& \psi_3 \otimes \psi_7 \otimes \psi_7, \psi_4 \otimes \psi_7 \otimes \psi_7 \}
\end{split}\end{equation*}
(see (\ref{st}) in the proof of Theorem \ref{d3}) and more generally:
\begin{Thm}\label{d3}
In three-partite system $\mathbb{C}^{d} \otimes \mathbb{C}^{d} \otimes \mathbb{C}^{d}$ where $d = 2^t (t \geq 1)$, there exist genuinely nonlocal sets of lattice-type GHZ states with cardinality $d+3$.
\end{Thm}

\begin{proof}
Inspired by the proof of Theorem \ref{s7}, we explicitly construct these sets $\mathcal{S}^{(t)}$ by mathematical induction on $t$. For $t=1, 2$, they are just $\mathcal{S}_5$, $\mathcal{S}_7$ in the main text. Now assume that $\mathcal{S}^{(t)} = \{\eta_1, \cdots, \eta_{2^t+3}\}$, together with the $Y^{(X_1\cdots X_t)}$, $Q_j^{(X_1\cdots X_t)} \ (1 \leq j \leq d+3$ where $d = 2^t)$ have been constructed, such that
\begin{equation}\begin{split}\label{20}
Y^{(A_1\cdots A_t)} & - \eta_2 - \eta_3 \\
\geq \ 0 = & \mathrm{T}_{A}(Q_{2}^{(A_1 \cdots A_t)}) = \mathrm{T}_{A}(Q_{3}^{(A_1 \cdots A_t)}), \\
Y^{(B_1\cdots B_t)} & - \eta_1 - \eta_3 \\
\geq \ 0 = & \mathrm{T}_{B}(Q_{1}^{(B_1 \cdots B_t)}) = \mathrm{T}_{B}(Q_{3}^{(B_1 \cdots B_t)}), \\
Y^{(C_1\cdots C_t)} & - \eta_1 - \eta_2 \\
\geq \ 0 = & \mathrm{T}_{C}(Q_{1}^{(C_1 \cdots C_t)}) = \mathrm{T}_{C}(Q_{2}^{(C_1 \cdots C_t)}), \\
\end{split}\end{equation}
and
\begin{equation}\begin{split}\label{21}
Y^{(A_1\cdots A_t)} - \eta_2 - \eta_3 \geq \mathrm{T}_{A}(Q_{j}^{(A_1 \cdots A_t)}) +& \eta_j \\
                                                                                                   (1 \leq j \leq d+3, \ & j \neq 2, 3),\\
Y^{(B_1\cdots B_t)} - \eta_1 - \eta_3 \geq \mathrm{T}_{B}(Q_{j}^{(B_1 \cdots B_t)}) +& \eta_j \\
                                                                                                   (1 \leq j \leq d+3, \ & j \neq 1, 3),\\
Y^{(C_1\cdots C_t)} - \eta_1 - \eta_2 \geq \mathrm{T}_{C}(Q_{j}^{(C_1 \cdots C_t)}) +& \eta_j \\
                                                                                                   (1 \leq j \leq d+3, \ & j \neq 1, 2).\\
\end{split}\end{equation}
Moreover, assume that $Tr[Y^{(A_1\cdots A_t)}] = Tr[Y^{(B_1\cdots B_t)}] = Tr[Y^{(C_1\cdots C_t)}] = 2^t + 2 = d+2$. It can be seen from (\ref{s51}), (\ref{s52}) and also the proof of Theorem \ref{s7} that these assumptions hold for $t = 1$.

Now, we let  $\mathcal{S}^{(t+1)} = \{\chi_1, \cdots, \chi_{2^{t+1}+3}\}$, where
\begin{equation}\label{st}
\chi_j = \left\{
\begin{array}{ll}
\eta_j \otimes \psi_0, & 1 \leq j \leq d+3  \\
\eta_{j-d} \otimes \psi_7, & d+4 \leq j \leq 2d+3.
\end{array} \right.
\end{equation}
Meanwhile, we set
\begin{equation*}\begin{split}
Y^{(A_1\cdots A_{t+1})} = Y^{(A_1\cdots A_t)} \otimes (\psi_0 + \psi_7) - \eta_2 \otimes \psi_7 - \eta_3 \otimes \psi_7, \\
Y^{(B_1\cdots B_{t+1})} = Y^{(B_1\cdots B_t)} \otimes (\psi_0 + \psi_7) - \eta_1 \otimes \psi_7 - \eta_3 \otimes \psi_7, \\
Y^{(C_1\cdots C_{t+1})} = Y^{(C_1\cdots C_t)} \otimes (\psi_0 + \psi_7) - \eta_1 \otimes \psi_7 - \eta_2 \otimes \psi_7, \\
\end{split}\end{equation*}
and
\begin{equation*}\begin{split}
& Q_j^{(X_1\cdots X_{t+1})} \\
= & \left\{
\begin{array}{ll}
Q_j^{(X_1\cdots X_{t})} \otimes (\psi_0 + \psi_7), & 1 \leq j \leq d+3  \\
Q_{j-d}^{(X_1\cdots X_{t})} \otimes (\psi_0 + \psi_7), & d+4 \leq j \leq 2d+3
\end{array} \right.
\end{split}\end{equation*}
where $X \in \{A, B, C\}$. Acting $\otimes (\psi_0 + \psi_7)$ on the r.h.s. of (\ref{20}) and (\ref{21}), we get
\begin{equation*}\begin{split}
Y^{(A_1\cdots A_{t+1})} & - \chi_2 - \chi_3 \\
\geq \ 0 = & \mathrm{T}_{A}(Q_{2}^{(A_1 \cdots A_{t+1})}) = \mathrm{T}_{A}(Q_{3}^{(A_1 \cdots A_{t+1})}), \\
Y^{(B_1\cdots B_{t+1})} & - \chi_1 - \chi_3 \\
\geq \ 0 = & \mathrm{T}_{B}(Q_{1}^{(B_1 \cdots B_{t+1})}) = \mathrm{T}_{B}(Q_{3}^{(B_1 \cdots B_{t+1})}), \\
Y^{(C_1\cdots C_{t+1})} & - \chi_1 - \chi_2 \\
\geq \ 0 = & \mathrm{T}_{C}(Q_{1}^{(C_1 \cdots C_{t+1})}) = \mathrm{T}_{C}(Q_{2}^{(C_1 \cdots C_{t+1})}), \\
\end{split}\end{equation*}
and
\begin{equation*}\begin{split}
& Y^{(A_1\cdots A_{t+1})} - \chi_2 - \chi_3 \\
\geq \ & \mathrm{T}_{A}(Q_{j}^{(A_1 \cdots A_{t+1})}) + \eta_j \otimes \psi_0 \\
& \left(\text{also} \ \ \mathrm{T}_{A}(Q_{j}^{(A_1 \cdots A_{t+1})}) + \eta_j \otimes \psi_7\right) \\
& \phantom{\mathrm{T}_{A}(Q_{j}) \otimes \otimes \otimes} (1 \leq j \leq d+3, \ j \neq 2, 3),\\
\end{split}\end{equation*}
\begin{equation*}\begin{split}
& Y^{(B_1\cdots B_{t+1})} - \chi_1 - \chi_3 \\
\geq \ & \mathrm{T}_{B}(Q_{j}^{(B_1 \cdots B_{t+1})}) + \eta_j \otimes \psi_0 \\
& \left(\text{also} \ \ \mathrm{T}_{B}(Q_{j}^{(B_1 \cdots B_{t+1})}) + \eta_j \otimes \psi_7\right) \\
& \phantom{\mathrm{T}_{A}(Q_{j}) \otimes \otimes \otimes} (1 \leq j \leq d+3, \ j \neq 1, 3),\\
\end{split}\end{equation*}
\begin{equation*}\begin{split}
& Y^{(C_1\cdots C_{t+1})} - \chi_1 - \chi_2 \\
\geq \ & \mathrm{T}_{C}(Q_{j}^{(C_1 \cdots C_{t+1})}) + \eta_j \otimes \psi_0 \\
& \left(\text{also} \ \ \mathrm{T}_{C}(Q_{j}^{(C_1 \cdots C_{t+1})}) + \eta_j \otimes \psi_7\right)  \\
& \phantom{\mathrm{T}_{A}(Q_{j}) \otimes \otimes \otimes} (1 \leq j \leq d+3, \  j \neq 1, 2).\\
\end{split}\end{equation*}
That is, the assumptions (\ref{20}) and (\ref{21}) also hold for $\mathcal{S}^{(t+1)}$. Besides, $Tr[Y^{(A_1\cdots A_{t+1})}] = Tr[Y^{(B_1\cdots B_{t+1})}] = Tr[Y^{(C_1\cdots C_{t+1})}] = 2(d+2)-2 = 2^{t+1}+2 < 2^{t+1}+3$. Since the constraints of the dual problems (\ref{dual}) are all satisfied for any $\mathcal{S}^{(t)} \ (t \geq 1)$, these sets are consequently genuinely nonlocal.
\end{proof}

This result is somewhat unexpected, for the size of such genuinely nonlocal sets can scale down to linear in the local dimension $d$, conspicuously with a small linear factor $l = 1$!  In sharp contrast, when strong nonlocality are considered, all existing nonlocal sets that've been constructed in three-partite system $\mathbb{C}^d \otimes \mathbb{C}^d \otimes \mathbb{C}^d$  have cardinality $\Theta(d^2)$ (with the leading coefficient $c>1$), in our best knowledge \cite{Yuan20,Wang21,Shi2022,Shi22,Li22arxiv}. This comparison might arguably imply a substantial difference between strong nonlocality and distinguishability-based genuine nonlocality.

\section{Constructing $(2,3)$-threshold sets in three-partite systems}\label{section4}

In the previous section, we have studied the genuine nonlocality of the GHZ states on three-qudit systems. Namely, the distinguishability when only two of the three parties $A, B, C$ are allowed to combine. In the $n$-partite scenario, if the system is in one of a priorly known set of quantum states that are genuinely nonlocal, then the actual state of the system cannot be revealed perfectly unless all $n$ parties are joint together to make global measurements. It's then natural and interesting to extend this notion of multipartite distinguishability to the more general case: a set of orthogonal states in the $n$-partite system is called \textit{$(s,n)$-threshold distinguishable} if the states are distinguishable when any $s$ or more parties are combined ($s \leq n$), but indistinguishable when at most $s-1$ parties can combine. Such sets are called \textit{$(s,n)$-threshold sets} here and genuinely nonlocal sets are just $(n,n)$-threshold sets in this sense. This notion of multipartite distinguishability is an analogue to the resembling notion in quantum secret sharing (QSS), where a \textit{$(s,n)$-threshold QSS scheme} means that the secret have been distributed into $n$ shares and any group of $s$ or more shares can collaboratively reconstruct the secret while no group of fewer than $s$ shares can \cite{Cleve1999,Gottesman2000,Rahaman2015,Wang2017}.

In this section, we consider the problem of constructing $(2,3)$-threshold sets consisting of GHZ states in three-partite systems, as a complement to our studies of $(3,3)$-threshold sets in Sec. \ref{section3}. It's routine to first consider the simplest case of three-qubit system. Unfortunately, it turns out that $(2,3)$-threshold sets consisting of states from the three-qubit GHZ basis do not exist:

\begin{Prop}\label{nt32}
In system $\mathbb{C}^{2} \otimes \mathbb{C}^{2} \otimes \mathbb{C}^{2}$, there exists no $(2,3)$-threshold set consisting of states from the three-qubit GHZ basis (\ref{GHZ}).
\end{Prop}

\begin{proof}
Suppose that $\mathcal{T}$ is a $(2,3)$-threshold set consisting of states from the three-qubit GHZ basis, namely, it is distinguishable across all bipartitions $A|BC$, $B|CA$ and $C|AB$ but indistinguishable when the three parties are fully separated. Now if $\mathcal{T}$ contains a certain conjuate pair, one can easily see from the schematic picture in FIG \ref{fig} that the set cannot contain a third state, for otherwise $\mathcal{T}$ will be indistinguishable across one of the bipartitions. However, all conjugate pairs are distinguishable by the three fully separated paries. As an example, for $|\psi_{0,7}\rangle=\frac{1}{\sqrt{2}} (|000\rangle \pm |111\rangle)$, since
$$\frac{|000\rangle + |111\rangle}{\sqrt{2}} = \frac{|+++\rangle + |+--\rangle + |-+-\rangle + |--+\rangle}{2}$$
and
$$\frac{|000\rangle - |111\rangle}{\sqrt{2}} = \frac{|++-\rangle + |+-+\rangle + |-++\rangle + |---\rangle}{2}$$
$$\left( |+\rangle = \frac{|0\rangle+|1\rangle}{\sqrt{2}}, \ \ |-\rangle = \frac{|0\rangle-|1\rangle}{\sqrt{2}}\right),$$
the three parties can do local POVM $\{|+\rangle\langle +|, |-\rangle\langle -|\}$ on their own subsystems and tell them apart by classical communications. Therefore, $\mathcal{T}$ must not contain any conjugate pair. Unfortunately, when no conjugate pair exists in $\mathcal{T}$, the three fully separated parties can distinguish the set by simply doing local POVM $\{|0\rangle\langle0|, |1\rangle\langle1|\}$ on their subsystems and communicating classically. As a result, no $(2,3)$-threshold set exists in this case.
\end{proof}

Although such $(2,3)$-threshold sets consisting of GHZ states do not exist in $\mathbb{C}^{2} \otimes \mathbb{C}^{2} \otimes \mathbb{C}^{2}$, it's natural to further consider their existence in systems of higher local dimension. Fortunately, by applying the method of semidefinite program similarly as before, we find the existence of $(2,3)$-threshold sets consisting of lattice-type GHZ states. If a set of orthogonal quantum states $S = \{|\Phi_1\rangle, \cdots, |\Phi_s\rangle\}$ in three-partite system $\mathcal{H}_A \otimes \mathcal{H}_B \otimes \mathcal{H}_C$ is locally distinguishable across the $A|B|C$ (fully separated) partition, then the semidefinite program
\begin{equation}\begin{split}\label{SDPsplit}
& \alpha_{A|B|C}  = \max_{_{P_1,   \cdots,   P_s}} \frac1s \sum_{k=1}^s Tr(P_k \Phi_k) \\
                        s.  t.   &\phantom{=}P_1 + \cdots + P_s = I_{_{ABC}}, \\
                             &\phantom{=}P_1,   \cdots,   P_s \geq 0, \\
                             &\phantom{=}\mathrm{T}_{A}(P_k), \mathrm{T}_{B}(P_k), \mathrm{T}_{C}(P_k) \geq 0 \ \ \ (1 \leq k \leq s)
\end{split}\end{equation}
must have optimal value $\alpha_{A|B|C} = 1$. To obtain an upper bound of $\alpha_{A|B|C}$, it's natural to consider its dual problem which we present in the following lemma:

\begin{Lem}\label{Dualsplit}
The dual problem of semidefinite program (\ref{SDPsplit}) has the form:
\begin{equation}\begin{split}\label{dualsplit}
& \beta_{A|B|C} = \min_{_{Y,   Q_k^{(A)}, Q_k^{(B)},  Q_k^{(C)}}} \frac1s \ Tr(Y) \\
                        s.  t.   &\phantom{=}Y - \Phi_k \geq \mathrm{T}_A (Q_k^{(A)}) + \mathrm{T}_B (Q_k^{(B)}) + \mathrm{T}_C (Q_k^{(C)}),\\
                                  &\phantom{=}Q_k^{(A)}, Q_k^{(B)}, Q_k^{(C)} \geq 0  \ \ \ \ \ \ (1 \leq k \leq s).
\end{split}\end{equation}
\end{Lem}

For consideration of readability, we place the proof in the Appendix. Since $\alpha_{A|B|C} \leq \beta_{A|B|C}$ (actually $\alpha_{A|B|C} = \beta_{A|B|C}$ by Slater's condition), once we find a feasible solution to the dual problem (\ref{dualsplit}) with target value smaller than $1$, the indistinguishability of $S = \{|\Phi_1\rangle, \cdots, |\Phi_s\rangle\}$ across the $A|B|C$ partition will be deduced immediately. For the distinguishability across the $2-1$ bipartitions, a necessary condition is that the family of semidefinite programs (\ref{dual}) have optimal values $\beta_X = 1$ for $X \in \{A, B, C\}$ ($\alpha_X = \beta_X$ by strong duality). These facts indicate that we can search for sets $\mathcal{T}$ such that $\beta_{A|B|C} < 1$ and $\beta_X =1 \ (X \in \{A, B, C\})$ as candidates for $(2,3)$-threshold sets in three-partite systems (we should further check their distinguishability across the $2-1$ partitions as PPT-distinguishability is just a necessary condition). In our case where GHZ-lattices are considered, these semidefinite programs reduce to linear programs by Lemma \ref{lemma1}, Lemma \ref{lemma2} and the complexity will be greatly reduced. By running numerical search, we find plenty of such sets in $\mathbb{C}^{4} \otimes \mathbb{C}^{4} \otimes \mathbb{C}^{4}$ and we present here only one of them with the smallest cardinality.
\medskip
\begin{Thm}\label{t32}
In system $\mathbb{C}^{4} \otimes \mathbb{C}^{4} \otimes \mathbb{C}^{4}$, the $5$-ary set
$$\mathcal{T}_5^{(2,3)} = \{\psi_0 \otimes \psi_1, \psi_0 \otimes \psi_6, \psi_0 \otimes \psi_3, \psi_3 \otimes \psi_0, \psi_4 \otimes \psi_0\}$$
consisting of three-partite lattice-type GHZ states is a $(2,3)$-threshold set.
\end{Thm}

\begin{proof}
First, we proof that $\mathcal{T}_5^{(2,3)}$ is indistinguishable in the $1-1-1$ partition. This can be done by presenting a feasible solution to the dual problem (\ref{dualsplit}), with target value smaller than $1$ ($19/20$ in this case):
\begin{equation*}\begin{split}
Y = & \frac12 \psi_0 \otimes (\psi_1 + \psi_6) + \psi_0 \otimes \psi_3 + \frac12 (\psi_3 + \psi_4) \otimes \psi_0  \\
+ &  \frac14 \psi_1 \otimes (\psi_0 + \psi_7) + \frac14 \psi_6 \otimes (\psi_0 + \psi_7) \\
+ &  \frac14 \psi_7 \otimes (\psi_3 + \psi_4) + \frac14 \psi_0 \otimes \psi_4; \\
\end{split}\end{equation*}

\begin{equation*}\begin{split}
& Q_1^{(A)} =  Q_2^{(A)} =  Q_3^{(A)} =0, \\
& Q_4^{(A)} =  \frac12 \psi_0 \otimes \psi_4 + \frac12 \psi_7 \otimes \psi_3,\\
& Q_5^{(A)} =  \frac12 \psi_0 \otimes \psi_3 + \frac12 \psi_7 \otimes \psi_4; \\
\end{split}\end{equation*}

\begin{equation*}\begin{split}
& Q_1^{(B)} = \frac12 (\psi_0 + \psi_7) \otimes \psi_4,\\
& Q_2^{(B)} = \frac12 (\psi_0 + \psi_7) \otimes \psi_3, \ Q_3^{(B)} = 0,\\
& Q_4^{(B)} =  \frac12 \psi_6 \otimes (\psi_0 + \psi_7),\\
& Q_5^{(B)} =  \frac12 \psi_1 \otimes (\psi_0 + \psi_7); \\
\end{split}\end{equation*}

\begin{equation*}\begin{split}
& Q_1^{(C)} =  \frac12 \psi_1 \otimes \psi_7 + \frac12 \psi_6 \otimes \psi_0,\\
& Q_2^{(C)} =  \frac12 \psi_1 \otimes \psi_0 + \frac12 \psi_6 \otimes \psi_7, \\
& Q_3^{(C)} =  Q_4^{(C)} =  Q_5^{(C)} =0. \\
\end{split}\end{equation*}

\medskip
This solution is actually an optimal one but we only need to check its feasibility. As an instance, we check here the constraint $Y - \psi_0 \otimes \psi_1 \geq \mathrm{T}_A (Q_1^{(A)}) + \mathrm{T}_B (Q_1^{(B)}) + \mathrm{T}_C (Q_1^{(C)})$ (with the help of expressions (\ref{transform})):
\begin{equation*}\begin{split}
  & \mathrm{T}_A (Q_1^{(A)}) + \mathrm{T}_B (Q_1^{(B)}) + \mathrm{T}_C (Q_1^{(C)}) \\
= \ & 0 +  \frac12 (\psi_0 + \psi_7) \otimes \frac12 (\psi_3 + \psi_4 - \psi_1 + \psi_6) \\
  &+ \frac18 (\psi_0 - \psi_7 + \psi_1 + \psi_6) \otimes (\psi_0 + \psi_7 - \psi_1 + \psi_6) \\
  &+ \frac18 (-\psi_0 + \psi_7 + \psi_1 + \psi_6) \otimes (\psi_0 + \psi_7 + \psi_1 - \psi_6) \\
= \ & \frac14 (\psi_1 + \psi_6) \otimes (\psi_0 + \psi_7) + \frac14 (\psi_0 + \psi_7) \otimes (\psi_3 + \psi_4) \\
  &+ \frac12 \psi_0 \otimes \psi_6  - \frac12 \psi_0 \otimes \psi_1 \\
\leq \ & Y - \psi_0 \otimes \psi_1,
\end{split}\end{equation*}
and the others can be checked similarly. For the distinguishability of $\mathcal{T}_5^{(2,3)}$ across the $2-1$ bipartitions, we present the explicit distinguishing protocol respectively:

\medskip
(i) For $A|BC$: $BC$ can first perform 2-outcome joint measurement $\{|00\rangle\langle00|+|11\rangle\langle11|, |01\rangle\langle01|+|10\rangle\langle10|\}_{B_2C_2}$ on the second share of qubit GHZ state to reduce $\mathcal{T}_5^{(2,3)}$ into $\{\psi_0 \otimes \psi_1, \psi_0 \otimes \psi_6\}$ and $\{\psi_0 \otimes \psi_3, \psi_3 \otimes \psi_0, \psi_4 \otimes \psi_0\}$ (see FIG \ref{fig}), of which the former can apparently be distinguished across $A|BC$ and the latter can be further distinguished into $\{\psi_0 \otimes$\sout{$\psi_3$}$\}$ or $\{\psi_3 \otimes$\sout{$\psi_0$}$, \psi_4 \otimes$\sout{$\psi_0$}$\}$ if $A|BC$ measure on the second share and communicate classically (we use the delete line ``\sout{$ \ \ $}'' to denote that the share of state has collapsed after measurements). Lastly, $\{\psi_3 \otimes$\sout{$\psi_0$}$, \psi_4 \otimes$\sout{$\psi_0$}$\}$ can also be distinguished by measuring on the first share.

\medskip
(ii) For $B|CA$: $CA$ can perform joint measurement $\{|00\rangle\langle00|+|11\rangle\langle11|, |01\rangle\langle01|+|10\rangle\langle10|\}_{C_2A_2}$ on the second share of state to reduce $\mathcal{T}_5^{(2,3)}$ into $\{\psi_3 \otimes \psi_0, \psi_4 \otimes \psi_0\}$ and $\{\psi_0 \otimes \psi_1, \psi_0 \otimes \psi_6, \psi_0 \otimes \psi_3\}$, where the former set is apparently distinguishable across $B|CA$; For the latter, $B$ can use the first share $\psi_0$ as resource, to teleport his second share of qubit to $CA$ and then $CA$ can distinguish $\{$\sout{$\psi_0$}$\otimes \psi_1,$\sout{$\psi_0$}$\otimes \psi_6,$\sout{$\psi_0$}$\otimes \psi_3\}$ by their own.

\medskip
(iii) For $C|AB$: $AB$ first perform joint measurement $\{|00\rangle\langle00|+|11\rangle\langle11|, |01\rangle\langle01|+|10\rangle\langle10|\}_{A_1B_1}$ on the first share of to reduce the set into $\{\psi_3 \otimes \psi_0, \psi_4 \otimes \psi_0\}$ and $\{\psi_0 \otimes \psi_1, \psi_0 \otimes \psi_6, \psi_0 \otimes \psi_3\}$, where the former is distinguishable across $C|AB$; For the other, $AB$ can further perform measurement $\{|00\rangle\langle00|+|11\rangle\langle11|, |01\rangle\langle01|+|10\rangle\langle10|\}_{A_2B_2}$ on the second share to reduce it into $\{\psi_0 \otimes \psi_3\}$ and $\{\psi_0 \otimes \psi_1, \psi_0 \otimes \psi_6\}$, where the latter one can be distinguished simply through the second share.
\end{proof}

\medskip
It's also straightforward to construct $(2,3)$-threshold sets in systems $\mathbb{C}^{2^t} \otimes \mathbb{C}^{2^t} \otimes \mathbb{C}^{2^t}$ ($t > 2$) inductively from $\mathcal{T}_5^{(2,3)}$, similarly as that of Proposition \ref{s10}. What's more, we have also found a $16$-ary superset of $\mathcal{T}_5^{(2,3)}$ that is PPT-distinguishable in all three $2-1$ bipartitions:
\begin{equation*}\begin{split}
\mathcal{T}_{16} =\{&\psi_0 \otimes \psi_1, \psi_0 \otimes \psi_6, \psi_0 \otimes \psi_3, \psi_3 \otimes \psi_0,\\
  & \psi_0 \otimes \psi_5, \psi_5 \otimes \psi_5, \psi_5 \otimes \psi_0,\psi_4 \otimes \psi_0,\\
  & \psi_6 \otimes \psi_5, \psi_6 \otimes \psi_3, \psi_6 \otimes \psi_6,\psi_6 \otimes \psi_0,\\
  & \psi_2 \otimes \psi_2, \psi_2 \otimes \psi_3,\psi_3 \otimes \psi_3, \psi_3 \otimes \psi_1\}.\\
\end{split}\end{equation*}
Note that $16$ achieves the upper bound of cardinality of such sets indeed (it's not hard to proof that any $k > d^2$ GHZ states in $\mathbb{C}^{d} \otimes \mathbb{C}^{d} \otimes \mathbb{C}^{d}$ are PPT-indistinguishable across the $2-1$ partitions). Yet, we don't know whether $\mathcal{T}_{16}$ is locally  distinguishable in the $2-1$ bipartitions. It's then interesting to find out the maximal intermediate $(2,3)$-threshold set(s) $\mathcal{T}_5^{(2,3)} \subseteq \mathcal{T}_{max}^{(2,3)} \subseteq \mathcal{T}_{16}$ with the largest cardinality $|\mathcal{T}_{max}^{(2,3)}|$, such that the intermediate sets between $\mathcal{T}_5^{(2,3)}$ and $\mathcal{T}_{max}^{(2,3)}$ are all $(2,3)$-threshold sets immediately!

\section{Conclusion and discussion}\label{section5}
In this paper,  we have studied the distinguishability-based genuine nonlocality of a typical type of genuine multipartite entangled states --- the GHZ states. We first study the genuine nonlocality of the three-qubit GHZ basis. Then using the result in the three-qubit case, we construct genuinely nonlocal sets of ``GHZ-lattices'' in three-partite systems where the local dimension are powers of $2$. It turns out that the size of these genuinely nonlocal sets with genuine multipartite entanglement can at least scale down to linear in the local dimension $d$, with a conspicuously small linear factor $l = 1$. The concept of genuine nonlocality, which concerns the distinguishability of multipartite quantum states through any possible bipartition of the subsystems, is in our opinion more naturally arising than the recently more popular ``strong nonlocality''. However, little is known  about this form of distinguishability-based nonlocality. A natural and interesting question is to ask: in what extent strong nonlocality is stronger than the more normal genuine nonlocality? As far as we know presently, no strongly nonlocal set with such small scale has been constructed yet, no matter product states or genuinely entangled states were considered. It's therefore reasonable to argue that there might exist a significant gap between the strength of strong nonlocality and the distinguishability-based genuine nonlocality. Besides that, our result might possibly also illuminate the relation between entanglement and distinguishability in multipartite scenario, substantiating the perspective that entanglement can somehow raise difficulty in state discrimination.

We also discuss the concept of $(s,n)$-threshold distinguishability, which extend the notion of genuine nonlocality to characterize the local accessibility of global information more comprehensively. Although there's no $(2,3)$-threshold sets in $\mathbb{C}^{2} \otimes \mathbb{C}^{2} \otimes \mathbb{C}^{2}$ consisting of states from the three-qubit GHZ basis, we have found their existence in systems of higher local dimension. We stress here that the techniques we use can be easily generalized to more-partite cases. For example, in the four-partite case, when distinguishability in bipartitions is considered, we can consider semidefinite programs (\ref{primal}) and (\ref{dual}) similarly by letting $X \in \{A, B, C, D, AB, BC, AC\}$; When considering distinguishability through the $1-1-1-1$ partition, semidefinite program (\ref{SDPsplit}) can be adjusted to
\begin{equation*}\begin{split}\label{SDPsplit4}
& \alpha_{A|B|C|D}  =  \max_{_{P_1,   \cdots,   P_s}} \frac1s \sum_{k=1}^s Tr(P_k \Phi_k) \\
                        s.  t.   &\phantom{=}P_1 + \cdots + P_s = I_{_{ABCD}}, \\
                             &\phantom{=}P_1,   \cdots,   P_s \geq 0, \\
                             &\phantom{=}\mathrm{T}_{A}(P_k), \mathrm{T}_{B}(P_k), \mathrm{T}_{C}(P_k), \mathrm{T}_{D}(P_k) \geq 0, \\
                             &\phantom{=}\mathrm{T}_{AB}(P_k), \mathrm{T}_{BC}(P_k), \mathrm{T}_{AC}(P_k) \geq 0 \ \ \ \ \ (1 \leq k \leq s)
\end{split}\end{equation*}
and so hence forth. It's interesting to derive some more universal results and this will be the main focus in our future work.

There are also some questions left to be considered. For example, whether the technique we use can be generalized to more general cases where the local dimension $d$ are not powers of $2$? Can we find genuinely nonlocal sets with even smaller cardinality? What is the situation when a large number of parties are considered? Besides, notice that we've constructed the genuinely nonlocal sets or threshold sets mainly through PPT-indistinguishability in this work. It's then natural and interesting to ask whether there exist other ways to construct these sets, such that they have even smaller cardinality or some other novel properties?

\bigskip

\bigskip
\centerline{\textbf{Acknowledgments}}
\medskip
This work was supported by the National Natural Science Foundation of China under Grant No. 62272492, the Guangdong Basic and Applied Basic Research Foundation under Grant No. 2020B1515020050, Key Research and Development Project of Guangdong Province under Grant No. 2020B0303300001 and the Guangdong Basic and Applied Basic Research Foundation under Grant No. 2020B1515310016. 

\section{Appendix}

{\noindent \bf Proof of Lemma \ref{lemma1}: }
\begin{proof}
It is routine to check this result. For example, notice that $\mathrm{T}_A(\psi_0) = (|000\rangle\langle000| + |111\rangle\langle111| + |100\rangle\langle011| + |011\rangle\langle100|)/2$, and $\psi_4 = (|011\rangle\langle011| + |100\rangle\langle100| - |011\rangle\langle100| - |100\rangle\langle011|)/2$. Then by adding these two operators we eliminate the cross terms:
\begin{equation}\begin{split}\label{TA04}
& \mathrm{T}_A(\psi_0) + \psi_4 \\
& = \frac{|000\rangle\langle000| + |111\rangle\langle111| + |011\rangle\langle011| + |100\rangle\langle100|}{2}.
\end{split}\end{equation}
Moreover, $\psi_0 + \psi_7 = |000\rangle\langle000| + |111\rangle\langle111|$ while $\psi_3 + \psi_4 = |011\rangle\langle011| + |100\rangle\langle100|$.
Therefore, we get $\mathrm{T}_A(\psi_0) = (\psi_0 + \psi_7 + \psi_3 - \psi_4)/2$ as it is shown in (\ref{transform}). By acting $\mathrm{T}_A$ on (\ref{TA04}), we get $\psi_0 + \mathrm{T}_A(\psi_4) = (\psi_0 + \psi_7 + \psi_3 + \psi_4)/2$ and so $\mathrm{T}_A(\psi_4) = (-\psi_0 + \psi_7 + \psi_3 + \psi_4)/2$. Similarly,
\begin{equation*}\label{TA73}
\mathrm{T}_A(\psi_3) + \psi_7 = \mathrm{T}_A(\psi_7) + \psi_3 = \frac{\psi_0 + \psi_7 + \psi_3 + \psi_4}{2},
\end{equation*}
hence we have $\mathrm{T}_A(\psi_7) = (\psi_0 + \psi_7 - \psi_3 + \psi_4)/2$ and $\mathrm{T}_A(\psi_3) = (\psi_0 - \psi_7 + \psi_3 + \psi_4)/2$. Actually, within the bipartition $A|BC$, $\mathcal{A}_1= \{\psi_0, \psi_7, \psi_3, \psi_4\}$ is locally equivalent to the Bell basis (see FIG \ref{fig}). Similarly for $\mathcal{A}_2 = \{\psi_1, \psi_6, \psi_2, \psi_5\}$, one can easily check that
\begin{equation*}\begin{split}
& \mathrm{T}_A(\psi_1) + \psi_5 = \mathrm{T}_A(\psi_5) + \psi_1 \\
= \ & \mathrm{T}_A(\psi_2) + \psi_6 = \mathrm{T}_A(\psi_6) + \psi_2 \\
= \ & \frac{|001\rangle\langle001| + |110\rangle\langle110| + |010\rangle\langle010| + |101\rangle\langle101|}{2} \\
= \ & \frac{\psi_1 + \psi_6 + \psi_2 + \psi_5}{2}.
\end{split}\end{equation*}
Hence, $\mathrm{T}_A$ has same matrix representation on $\mathcal{A}_2$ and so forth for the other two bipartions.
\end{proof}

\bigskip
{\noindent \bf Proof of Theorem \ref{s7}: }
\begin{proof}
We present here a set of feasible solutions (actually optimal ones) to the dual problems for $\mathcal{S}_{7}$, inductively from the solutions (\ref{s51}) for $\mathcal{S}_{5}$:

\noindent For $A|BC$,
\begin{equation*}\begin{split}
Y^{(A_1A_2)} = & Y^{(A_1)} \otimes (\psi_0 + \psi_7) - \psi_1 \otimes \psi_7 - \psi_2 \otimes \psi_7, \\
\end{split}\end{equation*}
\begin{equation}\begin{split}\label{A}
& Q_{i}^{(A_1A_2)} =  Q_{i}^{(A_1)} \otimes (\psi_0 + \psi_7), \ \ (1 \leq i \leq 5) \\
& Q_6^{(A_1A_2)} =  Q_4^{(A_1)} \otimes (\psi_0 + \psi_7), \\
& Q_7^{(A_1A_2)} =  Q_5^{(A_1)} \otimes (\psi_0 + \psi_7); \\
\end{split}\end{equation}

\noindent For $B|CA$,
\begin{equation*}\begin{split}
Y^{(B_1B_2)} = & Y^{(B_1)} \otimes (\psi_0 + \psi_7) - \psi_0 \otimes \psi_7 - \psi_2 \otimes \psi_7,   \\
\end{split}\end{equation*}
\begin{equation}\begin{split}
& Q_{i}^{(B_1B_2)} =  Q_{i}^{(B_1)} \otimes (\psi_0 + \psi_7), \ \ (1 \leq i \leq 5) \\
& Q_6^{(B_1B_2)} =  Q_4^{(B_1)} \otimes (\psi_0 + \psi_7), \\
& Q_7^{(B_1B_2)} =  Q_5^{(B_1)} \otimes (\psi_0 + \psi_7); \\
\end{split}\end{equation}

\noindent For $C|AB$,
\begin{equation*}\begin{split}
Y^{(C_1C_2)} = & Y^{(C_1)} \otimes (\psi_0 + \psi_7) - \psi_0 \otimes \psi_7  - \psi_1 \otimes \psi_7,  \\
\end{split}\end{equation*}
\begin{equation}\begin{split}
& Q_{i}^{(C_1C_2)} =  Q_{i}^{(C_1)} \otimes (\psi_0 + \psi_7), \ \ (1 \leq i \leq 5) \\
& Q_6^{(C_1C_2)} =  Q_4^{(C_1)} \otimes (\psi_0 + \psi_7), \\
& Q_7^{(C_1C_2)} =  Q_5^{(C_1)} \otimes (\psi_0 + \psi_7). \\
\end{split}\end{equation}

Now the feasibility of these solutions can be checked inductively from the feasibility of (\ref{s51}). Taking $A|BC$ for example, it's routine to check from (\ref{s52})  that
$$Y^{(A_1)} - (\psi_1 + \psi_2) > 0 = \mathrm{T}_A(Q_{2}^{(A_1)}) = \mathrm{T}_A(Q_{3}^{(A_1)}),$$
and
$$Y^{(A_1)} - (\psi_1 + \psi_2) \geq \psi_{k-1} + \mathrm{T}_A(Q_{k}^{(A_1)}) \ \ (k = 1, 4, 5).$$
Therefore, we have
\begin{equation*}\begin{split}
  & Y^{(A_1A_2)} - (\psi_1 \otimes \psi_0 + \psi_2 \otimes \psi_0)\\
= \ & Y^{(A_1)} \otimes (\psi_0 + \psi_7) - (\psi_1 + \psi_2) \otimes (\psi_0 + \psi_7) \\
> \ & 0 = \mathrm{T}_A(Q_{2}^{(A_1A_2)}) = \mathrm{T}_A(Q_{3}^{(A_1A_2)}),
\end{split}\end{equation*}
and
\begin{equation*}\begin{split}
  & Y^{(A_1A_2)} - (\psi_1 \otimes \psi_0 + \psi_2 \otimes \psi_0) \\
= \ & Y^{(A_1)} \otimes (\psi_0 + \psi_7) - (\psi_1 + \psi_2) \otimes (\psi_0 + \psi_7) \\
> \ & [\psi_{k-1} + \mathrm{T}_A(Q_{k}^{(A_1)})] \otimes (\psi_0 + \psi_7) \\
> \ & \psi_{k-1} \otimes \psi_0 + \mathrm{T}_A(Q_{k}^{(A_1A_2)})\\
&\left(\text{also} \ \ \psi_{k-1} \otimes \psi_7 + \mathrm{T}_A(Q_{k}^{(A_1A_2)})\right) \ \ \ \ \ \ (k = 1, 4, 5).
\end{split}\end{equation*}
This means that $Y^{(A_1A_2)} - \Psi_j \geq \mathrm{T}_A(Q_{j}^{(A_1A_2)})$ for all $1 \leq j \leq 7$, where $\{\Psi_1, \cdots, \Psi_7\} = \mathcal{S}_{7}$. Hence, the solution (\ref{A}) for bipartition $A|BC$ is feasibile and so forth for the other two bipartitions. Since the corresponding target values for these solutions are all smaller than $1$ ($6/7, 6/7, 6/7$ respectively), the set $\mathcal{S}_{7}$  is genuinely nonlocal.
\end{proof}

\bigskip

{\noindent \bf Proof of Lemma \ref{Dualsplit}: }

A general standard form of semidefinite program is defined as:
\begin{equation*}\begin{split}
\alpha & =  \ \max_{X} \ Tr(MX) \\
                        s.  t. \  & \ \Omega(X)= N, \\
                             & \ X \geq 0 \\
\end{split}\end{equation*}
where $M$, $N$ and $X$ (called optimization variable) are Hermitian operators and $\Omega(\cdot)$ is a linear map that preserves Hermiticity. The problem is called primal problem and its dual problem is defined as:
\begin{equation*}\begin{split}
\beta & =  \ \min_{Y} \ Tr(NY) \\
                        s.  t. \  & \ \Omega^{\ast}(Y) \leq M \\
\end{split}\end{equation*}
where the optimization variable $Y$ is also Hermitian and $\Omega^{\ast}$ is the adjoint map of $\Omega$ satisfying $Tr[\Omega(X) \cdot Y] = Tr[X \cdot \Omega^{\ast}(Y)]$ for arbitrary Hermitian $X$ and $Y$.

Putting the semidefinite program (\ref{SDPsplit}) into this standard form, we have that

\begin{widetext}
\begin{equation*}
M = \arraycolsep = 0.33em \left(\begin{array}{c} \begin{matrix}
& \Phi_1 & & & & & &  \\
& &\ddots & & & & 0's &  \\
& & & \Phi_s & & & &  \\
& & & & 0  & & &  \\
& & & & & \ddots & & \\
& & 0's & & & & \ddots &  \\
& & & & & & & 0 \\
\end{matrix}\end{array}\right), \ \ \ \ \ \ \ N = \arraycolsep = 0.33em
\left(\begin{array}{c} \begin{matrix}
& I_{ABC} & & & 0's & \\
& & 0  & & & \\
& & & \ddots & & \\
& 0's & & & \ddots &  \\
& & & & & 0  \\
\end{matrix}\end{array}\right),
\end{equation*}
and
\begin{equation*}
\Omega(X) = \arraycolsep = 0.05em \left(\begin{array}{c} \begin{matrix}
 X_1 + \cdots + X_s & & & & & & & & &  \\
 & \mathrm{T}_A(X_1)&-X^{\prime}_1 & & & & & 0's & &\\
 & & \ddots & & & & & & & \\
 & & \mathrm{T}_A(&X_s)-X^{\prime}_s & & & & & & \\
 & & & & \mathrm{T}_B(X_1)&-X^{\prime}_{s+1} & & & &\\
 & & & & & \ddots & & & & \\
 & & & & & \mathrm{T}_B(X_s&)-X^{\prime}_{2s} & & & \\
 & & & & & & & \mathrm{T}_C(X_1)&-X^{\prime}_{2s+1} & \\
 & & 0's & & & & & & \ddots & \\
 & & & & & & & & \mathrm{T}_C(X_s)&-X^{\prime}_{3s} \\
\end{matrix}\end{array}\right)
\end{equation*}
where the optimization variable $X \geq 0$ has been written as
\begin{equation*}
X = \frac1s \arraycolsep = 0.25em \left(\begin{array}{c} \begin{matrix}
& X_1 & & & & & &  \\
& & \ddots & & & & \ast & \\
& & & X_s & & & &  \\
& & & & X^{\prime}_1 & & &  \\
& & & & & \ddots & & \\
& & \ast & & & & \ddots & \\
& & & & & & & X^{\prime}_{3s} \\
\end{matrix}\end{array}\right).
\end{equation*}
Namely, $X_1, \cdots, X_s$, $X^{\prime}_1, \cdots, X^{\prime}_{3s}$ are the diagonal blocks of $sX$ and they are therefore semidefinite positive. For its dual problem, denote
\begin{equation*}
Y = \frac1s \arraycolsep = 0.25em \left(\begin{array}{c} \begin{matrix}
Y^{\prime} & & & &  \\
 & -Y_1 & & \ast &  \\
 & & \ddots & & \\
 & \ast & & \ddots & \\
 & & & & -Y_{3s} \\
\end{matrix}\end{array}\right)
\end{equation*}
the optimization variable and we have
\begin{equation*}\begin{split}
 & \ Tr[\Omega(X) \cdot Y] \\
= & \ \frac1s Tr[(X_1 + \cdots + X_s) \cdot Y^{\prime}] - \frac1s \sum_{i=1}^{s} Tr\{ [ \mathrm{T}_A(X_i)-X^{\prime}_i ] \cdot Y_i \} - \frac1s \sum_{i=1}^{s} Tr\{ [ \mathrm{T}_B(X_i)-X^{\prime}_{s+i} ] \cdot Y_{s+i} \}\\
 & - \frac1s \sum_{i=1}^{s} Tr\{ [ \mathrm{T}_C(X_i)-X^{\prime}_{2s+i} ] \cdot Y_{2s+i} \} \\
= & \ Tr\left[ \frac1s \arraycolsep = 0.05em \left(\begin{array}{c} \begin{matrix}
& X_1 & & & & & &  \\
& & \ddots & & & & \ast & \\
& & & X_s & & & &  \\
& & & & X^{\prime}_1 & & &  \\
& & & & & \ddots & & \\
& & \ast & & & & \ddots & \\
& & & & & & & X^{\prime}_{3s} \\
\end{matrix}\end{array}\right)_{=X}  \cdot \arraycolsep = 0.02em \ \left(\begin{array}{c} \begin{matrix}
& Y^{\prime}-\mathrm{T}_A(Y_1) \ -&\mathrm{T}_B(Y_{s+1})-\mathrm{T}_C(Y_{2s+1}) & & & & 0's &  \\
& & \ddots & & & & & & \\
& & Y^{\prime}-\mathrm{T}_A(Y_s)-\mathrm{T}_B(Y_{2s})-&\mathrm{T}_C(Y_{3s}) & & & &  \\
& & & & \ & Y_1 & & &  \\
& & & & & & \ddots & & \\
& 0's & & & & & & \ddots & \\
& & & & & & & & Y_{3s} \\
\end{matrix}\end{array}\right)_{=\Omega^{\ast}(Y)}
\right],
\end{split}\end{equation*}
where we apply $Tr[\mathrm{T}_{A}(V)\cdot W] = Tr[V \cdot \mathrm{T}_{A}(W)]$ (also for $\mathrm{T}_{B}$, $\mathrm{T}_{C}$) in the second equality. Consequently, by the constraint $\Omega^{\ast}(Y) \geq M$ we get the dual problem as:
\begin{equation*}\begin{split}
& \beta= \min_{_{Y^{\prime},   Y_1, \cdots,  Y_{3s}}} \frac1s \ Tr(Y^{\prime}) \\
                        s. t. \ & \ Y^{\prime} - \Phi_i \geq \mathrm{T}_A (Y_i) + \mathrm{T}_B (Y_{s+i}) + \mathrm{T}_C (Y_{2s+i}), \ \ (1 \leq i \leq s)\\
                                  & \ Y_1, \cdots, Y_{3s} \geq 0
\end{split}\end{equation*}
which is equivalent to problem (\ref{dualsplit}).
\end{widetext}

\end{document}